\documentclass{lmcs}
\pdfoutput=1

% LMCS Layouting Macros
\usepackage{lastpage}
\lmcsdoi{16}{4}{11}
\lmcsheading{}{\pageref{LastPage}}{}{}%
{Oct.~11,~2019}{Nov.~30,~2020}{}

\keywords{inductive theorem proving, arithmetical theories, clause logic, cyclic proofs}

\usepackage[utf8]{inputenc}
\usepackage[T1]{fontenc}
\usepackage{graphicx}
\usepackage{grffile}
\usepackage{longtable}
\usepackage{wrapfig}
\usepackage{rotating}
\usepackage[normalem]{ulem}
\usepackage{amsmath}
\usepackage{textcomp}
\usepackage{amssymb}
\usepackage{capt-of}
\usepackage{url}

% hypertex metadata automatically set by lmcs.cls
\usepackage[utf8]{inputenc}
\usepackage{amsmath}
\usepackage{amssymb}
\usepackage{amsthm}
\usepackage{stmaryrd}
\usepackage{todonotes}
\usepackage{bussproofs}
\usepackage{pdflscape}
\usepackage{hyperref}
\usepackage{mathtools}
\usepackage[nolist, nohyperlinks]{acronym}

\newcommand{\numeral}[1]{\overline{#1}}

\newcommand{\Succ}{\csym{s}}
\newcommand{\Param}{\eta}
\newcommand{\Zero}{\csym{0}}
\newcommand{\csym}[1]{\mathsf{#1}}
\newcommand{\Pred}{\mathsf{p}}
\newcommand{\TriSymbol}{\triangleright}
\newcommand{\Tri}[2]{{#1} \, \triangleright \, {#2}}

\newcommand{\PredP}{\mathsf{P}}
\newcommand{\PredQ}{\mathsf{Q}}

\newcommand{\fsort}[1]{\mathsf{#1}}
\newcommand{\nat}{\fsort{nat}}
\newcommand{\sortNat}{\nat}

% open theory of triangular numbers with predecessor and biconditional
\newcommand{\TDelta}{T_\triangleright}
% open theory of triangular numbers
\newcommand{\TDeltaSimple}{T_\triangleright'}
% theory \TDelta extended by open induction.
\newcommand{\TDeltaI}{T_\triangleright^{I}}

\newcommand{\SuccInter}{\mathbf{S}}
\newcommand{\PlusInter}{\boldsymbol{+}}
\newcommand{\ZeroInter}{\mathbf{0}}
\newcommand{\PredInter}{\mathbf{P}}

\newcommand{\TriInterSymbol}{\blacktriangleright}
\newcommand{\TriInter}[2]{{#1} \ \blacktriangleright {#2}}

\newcommand{\theoryIndE}[2]{I\exists_{#2}(#1)}
\newcommand{\theoryIndEL}[1]{I\exists_{#1}}
\newcommand{\theoryIndOpen}[1]{I\mathrm{Open}(#1)}
\newcommand{\theoryIndOpenL}{I\mathrm{Open}}
\newcommand{\E}[1]{\exists_{#1}}
\newcommand{\A}[1]{\forall_{#1}}

\newcommand{\formulaInd}[2]{I_{#1}{#2}}

\newcommand{\lkid}{\mathsf{LKID}}

%\DeclareUrlCommand\emailUrl{\urlstyle{rm}}
%\newcommand{\email}[1]{\href{mailto:#1}{\protect\emailUrl{#1}}}
%\newcommand{\emailVierling}{\email{jannik.vierling@tuwien.ac.at}}
%\newcommand{\emailHetzl}{\email{stefan.hetzl@tuwien.ac.at}}

%%% Local Variables:
%%% mode: latex
%%% TeX-master: "paper.tex"
%%% End:

\begin{document}

\title{Clause Set Cycles and Induction}

\author[S.~Hetzl]{Stefan Hetzl}
\address{Vienna University of Technology\protect\\Institute of Discrete Mathematics and Geometry}	%required
\email{\{stefan.hetzl,jannik.vierling\}@tuwien.ac.at}

\author[J.~Vierling]{Jannik Vierling}
%\address{Vienna University of Technology\protect\\Institute of Discrete Mathematics and Geometry}	%required
%\email{jannik.vierling@tuwien.ac.at}
\thanks{Supported by the Vienna Science and Technology Fund (WWTF) project VRG12-004}

\begin{abstract}
  In this article we relate a family of methods for automated inductive theorem proving based on cycle detection in saturation-based provers to well-known theories of induction.
  To this end we introduce the notion of clause set cycles---a formalism abstracting a certain type of cyclic dependency between clause sets.
  We first show that the formalism of clause set cycles is contained in the theory of \(\E{1}\) induction.
    Secondly, we consider the relation between clause set cycles and the theory of open induction.
  By providing a finite axiomatization of a theory of triangular numbers with open induction we show that the formalism of clause set cycles is not contained in the theory of open induction.
  Furthermore, we conjecture that open induction and clause set cycles are incomparable.
  Finally, we transfer these results to a concrete method of automated inductive theorem proving called the n-clause calculus.
\end{abstract}

\maketitle

\section{Introduction}

\label{sec:org62947a1}
The subject of \ac{AITP} aims at automating the process of finding proofs by mathematical induction.
\ac{AITP} is of paramount importance to the formal verification of software and hardware.
Every non-trivial program contains loops or recursion, hence its verification requires some inductive reasoning.
But also the development of proof assistants can benefit from automated inductive theorem proving by providing hammers that can discharge lemmas automatically.

It is folklore that finding proofs by induction is difficult because of the necessity of non-analyticity of induction formulas.
The non-analyticity of induction formulas can be explained proof-theoretically by the failure of cut-elimination in LK with an induction rule, see \cite{wong2017some} for a precise statement.
A wide variety of approaches have been proposed to address this problem.
Among others there are approaches based on enhancements of saturation-based provers \cite{cruanes2017superposition, echenim2019combining, kersani2013combining}, cyclic proofs \cite{brotherston2012generic}, rippling \cite{bundy1993rippling}, theory exploration \cite{claessen2013automating}, etc.
Most of these approaches are rather technical in nature and are thus difficult to analyze formally.
Hence, most of the analyses carried out for methods of automated inductive theorem proving are empirical and little is known about the theoretical properties of these methods.
We believe that providing formal analyses of these methods will contribute to the theoretical foundations of the subject and thus help in developing better methods.

The work presented in this article continues the analysis of Kersani and Peltier's n-clause calculus started in the second author's master's thesis \cite{vierling2018cyclic}.
In \cite{vierling2018cyclic} refutations of the n-clause calculus are translated into proofs of the sequent calculus \(\lkid\) introduced in \cite{brotherston2010sequent}.
This translation makes it possible to read off the induction formulas used by the n-clause calculus.
The analysis carried out in \cite{vierling2018cyclic} operates directly on the n-clause calculus as originally defined in \cite{kersani2013combining} without introducing an intermediary abstraction.
Therefore, the results obtained in \cite{vierling2018cyclic} are complicated by inessential technical details.
Furthermore, the technicalities of the n-clause calculus made it difficult to state conjectures clearly.
For instance, the n-clause calculus imposes restrictions on the types of function symbols.
Because of these restrictions it is already difficult to express simple properties such as for example the associativity of natural numbers.
In this article we extend the previous work by introducing an abstraction called clause set cycles.
This new formalism abstracts the inessential details of the n-clause calculus, thus, allowing us to carry out a more systematic analysis and to formulate more general conjectures.

The article is structured as follows.
In Section \ref{sec:org353db4f} we will define the notion of clause set cycle and the associated notion of refutability by a clause set cycle.
We will then situate these notions with respect to \(\E{1}\) induction, and hence show that the formalism is inherently weak.
This result is a generalization of the main theorem (Theorem 6.27) of \cite{vierling2018cyclic}.
In Section \ref{sec:org54c1dd2} we will provide a finite axiomatization for a theory of triangular numbers with open induction.
This result will then serve as the main technical result in Section \ref{sec:org567ac08}, where we will show that the notion of refutability by a clause set cycle is not weaker than open induction.
In Section \ref{sec:orga779b6c} we will show that the n-clause calculus is indeed a special case of the system of clause set cycles and transfer the main results of sections \ref{sec:org353db4f} and \ref{sec:org567ac08} to the n-clause calculus.
We thus answer positively the conjecture of \cite{vierling2018cyclic} that there exists a clause set that is refutable by the n-clause calculus, but that is not refutable with open induction.
As a result we situate the power of the n-clause calculus with respect to the theories of \(\E{1}\) induction and open induction.

\section{\texorpdfstring{Clause Set Cycles and \(\E{1}\) Induction}{Clause Set Cycles and E1 Induction}}
\label{sec:org353db4f}
We work in a many-sorted first-order classical logic.
By a many-sorted first-order classical logic we understand a classical first-order logic with a finite set of sorts \(\mathcal{S}\).
The sorts represent universes and are interpreted as pairwise distinct sets.
Each function symbol \(f\) has a type of the form \(s_{1} \times \dots \times s_{n} \to s_{n+1}\), where \(s_{1}, \dots, s_{n}, s_{n + 1} \in \mathcal{S}\) are sorts and \(n \geq 0\).
Analogously, a predicate symbol \(P\) has a type of the form \(s_{1} \times \dots \times s_{n}\), again with \(s_{1}, \dots, s_{n} \in \mathcal{S}\) and \(n \geq 0\).
Let the sort \(s \in \mathcal{S}\) be interpreted as the set \(U_{s}\), then a function symbol \(f\) with type as above, will be interpreted as a function that maps elements of \(U_{s_{1}} \times \dots \times U_{s_{n}}\) to elements of \( U_{s_{n + 1}}\).
Analogously, a predicate symbol \(P\) with type as above is interpreted as a subset of \(U_{s_{1}} \times \dots \times U_{s_{n}}\).
Each individual variable ranges over a fixed sort.
Whenever the sort of a variable is clear from the context, we will not mention it explicitly.
Terms and formulas are defined as usual, but function symbols and relation symbols need to be applied to terms that agree with the type of the symbol.

Every language that we consider is supposed to contain at least the sort \(\sortNat\) representing the natural numbers, with its function symbols \(\Zero : \sortNat\) representing the number \(0\) and \(\Succ : \sortNat \to \sortNat\) representing the successor function.
In case the language contains only one sort, we specify function symbols by pairs of the form \(f/n\) where \(f\) is a function symbol and \(n\) is a natural number representing the arity of the symbol \(f\).
In the following we fix one such language and denote it by \(L\).
Formulas, structures, models, truth, validity, \(\models\), \(\vdash\), etc. are defined as usual.

By \(\Param\) we denote a distinguished variable ranging over the sort \(\sortNat\).
We will call this variable \(\Param\) a parameter.
The parameter \(\Param\) is mainly used to indicate positions on which arguments by induction take place, that is, the parameter usually plays the role of the induction variable.
Usually the parameter \(\Param\) will occur freely, in other words, it will not be bound by quantifiers and therefore behaves similarly to a constant.
In the literature a similar concept of parameter is used, with the difference that the parameter is usually a constant (even a Skolem constant).
In our case treating the parameter as a variable seems to be more natural, especially when dealing with the language of induction formulas.

Let \(f\) be a unary function symbol and \(t\) a term.
In order to save parentheses we write \(ft\) for the term \(f(t)\).
By \(f^{n}t\) we abbreviate the term
\[
  \underbrace{f(f(\dots f}_{\text{\(n\) times}}(t) \dots )).
\]
Let \(n \in \mathbb{N}\), then by \(\numeral{n}\) we denote the term \(\Succ^{n}\Zero\).
Let \(t, t_{1}, \dots, t_{n}\) be terms of sort \(\sortNat\) and \(+\) be a function symbol of type \(\sortNat \times \sortNat \to \sortNat\) denoting the addition of natural numbers.
For the sake of readability we will use the symbol \(+\) as an infix symbol.
The expression \(t_{1} + t_{2} + \dots + t_{n}\) abbreviates the term \((\dots(t_{1} + t_{2}) + \dots ) + t_{n}\) and \(nt\) denotes the term
\[
  \underbrace{t + t+ \cdots  + t}_{\text{\(n\) times}}.
\]

\begin{defi}[Literal, Clause, Clause set]
An \(L\) formula \(l(\vec{x})\) is called an \(L\) literal if it is an atom or the negation of an atom.
An \(L\) formula \(C( \vec{x} )\) is said to be an \(L\) clause if it of the form \(\forall \vec{y} \, \bigvee_{i = 1}^k l_i\) where \(l_i(\vec{x}, \vec{y})\) with \(i \in \{1,\dots,k\}\) is a literal. 
An \(L\) formula \(S( \vec{x} )\) is called an \(L\) clause set if it is of the form \(\bigwedge_{i = 1}^m C_i\) where \(C_i\) with \(i \in \{1,\dots,m\}\) is an \(L\) clause.
\end{defi}
When the language \(L\) is clear from the context we simply say literal, clause, and clause set instead of \(L\) literal, \(L\) clause, and \(L\) clause set.

Let \(\varphi\) and \(\psi\) be formulas, then by \(\varphi \rightarrow \psi\) we abbreviate the formula \(\neg \varphi \vee \psi\).
In order to save some parentheses we use the standard convention that \(\rightarrow\) associates to the right and has a lower precedence than \(\neg\), \(\vee\), and \(\wedge\).
For example, the formulas \(\neg \varphi \rightarrow \psi\), \(\varphi \wedge \theta \rightarrow \psi\), \(\varphi \rightarrow \theta \vee \psi\) are to be read as \((\neg \varphi) \rightarrow \psi\), \((\varphi \wedge \theta) \rightarrow \psi\) and \(\varphi \rightarrow (\theta \vee \psi)\), respectively.
For the sake of readability we will often not distinguish between a clause and an equivalent formula up to negation normal form.
For example if \(a_{1}, \dots, a_{n}\), and \(b_{1}, \dots, b_{m}\) are atoms, then we also call the formula
\[
  (a_{1} \wedge \dots \wedge a_{n}) \rightarrow (b_{1} \vee \dots \vee b_{m})
\]
a clause, because its negation normal form is the clause \(\bigvee_{i =1}^{n}\neg a_{i} \vee \bigvee_{i = 1}^{m}b_{i}\).
We usually present a concrete clause set \(\bigwedge_{i = 1}^{n}C_{i}\) as the list of clauses \(C_{1}\), \dots, \(C_{n}\).
Similarly, we will usually present the axioms of a theory as a list of formulas.
We are now ready to define the notion of clause set cycles and the related notion of refutability by a clause set cycle.
\begin{defi}
An \(L\) clause set \(S(\Param)\) is called an \(L\) clause set cycle if it satisfies the following conditions
\begin{align}
S( \Succ \Param ) & \models S( \Param ), \label{cond:csc:1} \\
S( \Zero )        & \models \bot. \label{cond:csc:2}
\end{align}
\end{defi}
Note that clause set cycles do not operate in some background theory.
However, a clause set cycle may contain clauses without free variables and these clauses act as a background theory.
A clause set cycle has a natural interpretation as an argument by infinite descent, which we will later explain in terms of induction.
Before that, we introduce the notion of refutation by a clause set cycle.
A refutation by a clause set cycle consists of a clause set cycle and a case distinction.
\begin{defi}
  We say that an \(L\) clause set \(R(\Param)\) is refuted by an \(L\) clause set cycle \(S(\Param)\) if there exists a natural number \(n\) such that
  \begin{align}
    \label{cond:refcsc:1} R(\Succ^n \Param) & \models S(\Param), \\
    \label{cond:refcsc:2} R( \numeral{k} )  & \models \bot, \ \text{for all $k \in \{0, \dots, n - 1\}$}. 
  \end{align}
\end{defi}
If the language is clear from the context we simply speak of clause set cycles and of clause sets refuted by a clause set cycle.
Let us consider an example in order to clarify the notions of clause set cycle and refutation by a clause set cycle.
\begin{exa}
  \label{example:refutation_by_clause_set_cycle}
  Let \(\PredP\) and \(\PredQ\) be unary predicate symbols over the sort \(\sortNat\) and let \(R(\Param)\) be the clause set consisting of the clauses
  \begin{gather*}
    \PredQ( \Param ), \\
    \neg \PredQ(\Zero), \\
    \forall x \, (\PredQ( \Succ x ) \rightarrow \PredP( x )), \\
    \neg \PredP( \Zero ), \\
    \forall x \, (\PredP( \Succ x ) \rightarrow \PredP( x )).
  \end{gather*}
  Our goal is to refute the clause set \(R(\eta)\) by a clause set cycle.

  First we claim that \(S(\Param) \coloneqq \PredP(\Param) \wedge \neg \PredP(\Zero) \wedge \forall x \, (\PredP(\Succ x) \rightarrow \PredP(x))\) is a clause set cycle.
  We start by showing that \(S(\Zero) \models \bot\).
  Observe that \(S(\Zero) = P(\Zero) \wedge \neg P(\Zero) \wedge \forall x \, (\PredP(\Succ x) \rightarrow \PredP(x))\).
  Hence we have \(S(\Zero) \models \bot\).
  It remains to show that \(S(\Succ \eta) \models S(\eta)\).
  Since \(S(\Succ \eta) = P(\Succ \eta) \wedge \neg P(0) \wedge \forall x \, (\PredP(\Succ x) \rightarrow \PredP(x))\), it suffices to show that \(S(\Succ \eta) \models \PredP(\eta)\).
  We have \(S(\Succ \eta) \models P(\Succ \eta)\) and \(S(\Succ \eta) \models \forall x \, (\PredP(\Succ x) \rightarrow \PredP(x))\), hence \(S(\Succ \eta) \models \PredP(\Succ \eta) \rightarrow \PredP(\eta)\) and by modus ponens we obtain \(S(\Succ \eta) \models P(\eta)\).
  Hence \(S(\Param)\) is a clause set cycle.

  Now we claim that \(R(\eta)\) is refuted by the clause set cycle \(S(\eta)\).
  It suffices to show the entailments \(R(\Zero) \models \bot\) and \(R(\Succ \eta) \models S(\eta)\).
  By definition of \(R(\eta)\) we have \(R(\eta) \models \PredQ(\eta)\) and \(R(\eta) \models \neg \PredQ(\Zero)\), hence \(R(\Zero) \models \PredQ(\Zero)\) and \(R(\Zero) \models \neg \PredQ(\Zero)\).
  Therefore \(R(\Zero) \models \bot\).
  For the second entailment we have \(R(\Succ \eta) \models \PredQ(\Succ \eta)\) and \(R(\Succ \eta) \models \forall x \, (\PredQ(\Succ x) \rightarrow \PredP(x))\), thus \(R(\Succ \eta) \models \PredQ(\Succ \eta) \rightarrow \PredP(\eta)\).
  Hence \(R(\Succ \eta) \models \PredP(\eta)\).
  We thus have \(R(\Succ \eta) \models S(\eta)\).
  Hence \(R(\Param)\) is refuted by the clause set cycle \(S(\Param)\) as claimed.
\end{exa}
The cycles introduced in \cite{kersani2013combining} by Kersani and Peltier are parameterized by two natural numbers \(i \geq 0\) and \(j \geq 1\), that control the argument by infinite descent.
The number \(i\)---the offset---is the number at which the argument by infinite descent stops, and \(j\)---the step---is the number by which the descent proceeds.
Our clause set cycles do not have such parameters and therefore appear somewhat restrictive.
Let us now demonstrate how to formulate the concepts of step and offset for clause set cycles.
\begin{defi}
  \label{def:2}
  Let \(S(\Param)\) be an \(L\) clause set and \(i, j \in \mathbb{N}\) with \(j \geq 1\). We say that \(S(\Param)\) is an \(L\) clause set cycle with offset \(i\) and step \(j\) if
  \begin{align*}
    S(\numeral{k + i}) & \models \bot, \ \text{for \(k = 0, \dots, j - 1\) and,} \\
    S(\Succ^{i + j}\Param) & \models S(\Succ^{i}\Param).
  \end{align*}
  An \(L\) clause set \(R(\Param)\) is refuted by the clause set cycle \(S(\eta)\) with offset \(i\) and step \(j\) if there exists a natural number \(n\) such that
  \begin{align*}
    R(\numeral{k}) & \models \bot, \ \text{for \(k = 0, \dots, n -1\)}, \\
    R(\Succ^{n}\Param) & \models S(\Succ^{i}\Param).
  \end{align*}
\end{defi}
We can now show that clause set cycles simulate clause set cycles with offset and step.
\begin{prop}
  \label{pro:3}
  Let \(R(\Param)\) be an \(L\) clause set.
  If \(R(\Param)\) is refuted by an \(L\) clause set cycle \(S(\Param)\) with offset \(i\) and step \(j\), then \(R(\Param)\) is refuted by an \(L\) clause set cycle.
\end{prop}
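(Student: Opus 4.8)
The plan is to simulate $S$ by the \emph{disjunction} of the $j$ consecutive shifts of $S$ that already occur in its step condition, i.e.\ to set
$S'(\Param) \coloneqq \bigvee_{l=0}^{j-1} S(\Succ^{i+l}\Param)$.
Literally this is a disjunction of clause sets rather than a clause set, but it is logically equivalent to one, which is all we need since clause set cycles and refutability by them are defined purely in terms of $\models$: writing $S = \bigwedge_{p=1}^{m} C_p$ with each $C_p$ a clause and distributing the disjunction over the conjunctions, $S'(\Param)$ becomes a finite conjunction of formulas of the form $\bigvee_{l=0}^{j-1} C_{p_l}(\Succ^{i+l}\Param)$, and a disjunction of clauses is, after renaming the bound variables apart and pulling the universal quantifiers to the front, again a clause.

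Next I would check the two defining conditions of a clause set cycle for $S'$. For \eqref{cond:csc:1}, note that $S'(\Succ\Param) = \bigvee_{l=1}^{j} S(\Succ^{i+l}\Param)$; the disjuncts with $1 \le l \le j-1$ are already disjuncts of $S'(\Param)$, while the remaining disjunct $S(\Succ^{i+j}\Param)$ entails $S(\Succ^{i}\Param)$ by the step condition of Definition~\ref{def:2}, and $S(\Succ^i\Param) = S(\Succ^{i+0}\Param)$ is also a disjunct of $S'(\Param)$; hence $S'(\Succ\Param) \models S'(\Param)$. For \eqref{cond:csc:2}, $S'(\Zero) = \bigvee_{l=0}^{j-1} S(\numeral{i+l})$, and every disjunct entails $\bot$ by the offset condition, so $S'(\Zero) \models \bot$. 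Thus $S'$ is an $L$ clause set cycle.

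Finally, to see that $S'$ refutes $R$, I would reuse the very number $n$ witnessing that $R$ is refuted by $S$ with offset $i$ and step $j$: condition \eqref{cond:refcsc:2} is then exactly what that witness provides, and for \eqref{cond:refcsc:1} we have $R(\Succ^n\Param) \models S(\Succ^i\Param) = S(\Succ^{i+0}\Param)$, which is one of the disjuncts of $S'(\Param)$, so $R(\Succ^n\Param) \models S'(\Param)$. The one genuine idea is to take a disjunction rather than a conjunction of the shifted copies — the disjunctive counterpart of the standard reduction of step-$j$ induction to step-$1$ induction via the formula $\bigwedge_{l=0}^{j-1}\varphi(x+l)$ — so that shifting the window $\{i,\dots,i+j-1\}$ upward by one lets the step condition fold the new top copy back onto the bottom one; everything else is routine bookkeeping, the only mild subtlety being the observation (handled in the first step) that a disjunction of clause sets is still equivalent to a clause set.
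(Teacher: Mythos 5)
Your proof is correct and takes essentially the same route as the paper: the paper first sets \(T(\Param) \coloneqq S(\Succ^{i}\Param)\) to drop the offset and then defines \(U(\Param) \coloneqq \bigvee_{l=0}^{j-1} T(\Succ^{l}\Param)\), which is literally your \(S'(\Param)\), and verifies the two cycle conditions by the same case split on whether the shifted disjunct lands inside the window or wraps around via the step condition. Your added observation that a disjunction of clause sets is, after renaming bound variables apart and prenexing, again equivalent to a clause set makes explicit a point the paper leaves implicit.
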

\begin{proof}
  It is straightforward to see that the offset \(i\) is inessential, by letting \(T(\Param) \coloneqq S(\Succ^{i}\Param)\) because the \(L\) clause set \(T(\Param)\) clearly is a clause set cycle with offset \(0\) and step \(j\) and refutes \(R(\Param)\) with offset \(0\) and step \(j\).
  In order to show that an arbitrary step \(j\) is inessential as well, we let
  \[U(\Param) \coloneqq \bigvee_{l = 0}^{j - 1}T(\Succ^{l}\Param).\]
  We will show that \(U\) is a clause set cycle.
  To show that \(U\) satisfies \eqref{cond:csc:1}, it suffices to observe that by the assumption, we have \(T(\numeral{k}) \models \bot\) for \(k = 0, \dots, j - 1\). Therefore \(U(\Zero) \models \bot\).
  In order to show that \(U\) satisfies \eqref{cond:csc:2}, we need to consider two cases.
  First let \(l \in \{ 0, \dots, j - 2\}\), then we have \(T(\Succ^{l + 1}\Param) \models T(\Succ^{l + 1}\Param)\), which implies \(T(\Succ^{l}\Succ\Param) \models U(\Param)\).
  Now let \(l = j - 1\), then we have \(T(\Succ^{j}\Param) \models T(\Param)\) by the assumption and, thus, \(T(\Succ^{l}\Succ\Param) \models U(\Param)\).
  Therefore, \(U\) is a clause set cycle and \(R\) is refuted by \(U\).
\end{proof}
Clause set cycles thus abstract parameters such as offset and step width and therefore simplify a formal analysis.

Let \(\psi(x, \vec{z})\) be a formula where \(x\) is a variable of sort \(\sortNat\), then the structural induction axiom \(\formulaInd{x}{\psi}\) is defined by
\[
    \formulaInd{x}{\psi} \coloneqq \forall \vec{z} \, (\psi(\Zero, \vec{z}) \rightarrow \forall x \, (\psi(x, \vec{z}) \rightarrow \psi(\Succ x, \vec{z})) \rightarrow \forall x \, \psi(x, \vec{z})).
\]
By an \(\E{1}\) formula we understand a formula of the form \(\exists \vec{x} \, \varphi(\vec{x}, \vec{y})\), where \(\varphi\) is quantifier-free. The notion of \(\A{1}\) formulas is defined dually to \(\E{1}\) formulas. We will now introduce the two theories of induction that are of interest for the study of the formalism of clause set cycles.
\begin{defi}
The theories \(\theoryIndE{L}{1}\) and \(\theoryIndOpen{L}\) are given by
\begin{gather*}
  \theoryIndE{L}{1} \coloneqq \{ \formulaInd{x}{\psi} \mid \text{$\psi(x, \vec{z})$ is an $\E{1}$, $L$ formula with $x : \sortNat$}\}, \\
  \theoryIndOpen{L} \coloneqq \{ \formulaInd{x}{\psi} \mid \text{$\psi(x, \vec{z})$ is a quantifier-free, $L$ formula with $x : \sortNat$}\}.
\end{gather*}
\end{defi}
Whenever the language \(L\) is clear from the context or irrelevant, we will write \(\theoryIndEL{1}\), \(\theoryIndOpenL\) instead of \(\theoryIndE{L}{1}\), \(\theoryIndOpen{L}\).
Let \(\varphi(x)\) be a formula with \(x\) of sort \(\sortNat\), then we say that \(\varphi\) is inductive if \(\vdash \varphi(\Zero)\) and \(\varphi(x) \vdash \varphi(\Succ x)\).

Let us now consider how the notions of clause set cycles and refutability by a clause set cycle relate to provability in theories of induction.
Let \(S(x)\) be a clause set cycle, then by \eqref{cond:csc:1}, \eqref{cond:csc:2}, and by the completeness of first-order logic, we obtain \(\vdash \neg S(\Zero)\) and \(\neg S(x) \vdash \neg S(\Succ x)\).
In other words, the formula \(\neg S(x)\) is inductive.
Since \(S\) is a clause set, \(S\) is logically equivalent to a \(\A{1}\) formula, hence \(\neg S(x)\) is logically equivalent to an \(\E{1}\) formula.
Therefore, we have:
\begin{prop}
\label{lemma:clause-set-cycles-are-unsatisfiable}
Let \(S(x)\) be a clause set cycle, then we have \(\theoryIndEL{1} \vdash \neg S( x )\).
\end{prop}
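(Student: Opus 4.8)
The plan is to read the two defining conditions of a clause set cycle contrapositively: they say precisely that \(\neg S\) is inductive. Using the discussion preceding the statement, I would take this as established --- from \(S(\Zero)\models\bot\) and completeness of first-order logic we get \(\vdash\neg S(\Zero)\), and from \(S(\Succ\Param)\models S(\Param)\) and completeness we get \(\neg S(\Param)\vdash\neg S(\Succ\Param)\) --- so \(\neg S(x)\) is inductive.

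Next I would fix an \(\E{1}\) formula representing \(\neg S\). Since \(S(x)\) is a conjunction of clauses \(\forall\vec y\,\bigvee_i l_i\), renaming the inner bound variables apart and pulling the universal quantifiers to the front shows that \(S(x)\) is logically equivalent to a single \(\A{1}\) formula; hence \(\neg S(x)\) is logically equivalent to an \(\E{1}\) formula \(\varphi(x)\) whose only free variable is \(x\). Because logical equivalence preserves inductiveness, \(\vdash\varphi(\Zero)\) and \(\varphi(x)\vdash\varphi(\Succ x)\) follow from the previous paragraph.

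Finally I would invoke the relevant instance of structural induction. The formula \(\varphi\) is \(\E{1}\), so \(\formulaInd{x}{\varphi}\in\theoryIndEL{1}\). Generalizing the derivation \(\varphi(x)\vdash\varphi(\Succ x)\) on \(x\) --- legitimate since \(\varphi\) has no free variable other than \(x\), so \(x\) occurs in no undischarged hypothesis --- yields \(\vdash\forall x\,(\varphi(x)\to\varphi(\Succ x))\), and together with \(\vdash\varphi(\Zero)\) the axiom \(\formulaInd{x}{\varphi}\) gives \(\theoryIndEL{1}\vdash\forall x\,\varphi(x)\). Since \(\varphi(x)\leftrightarrow\neg S(x)\) is valid, \(\theoryIndEL{1}\vdash\neg S(x)\), as required.

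There is no genuine difficulty in this argument; the only places that need care are the passage from the clause set \(S\) to an honest \(\A{1}\) formula (so that, although \(\theoryIndEL{1}\) literally contains induction only for syntactically \(\E{1}\) formulas, it nonetheless proves induction for \(\neg S\) via the equivalent \(\varphi\)), and checking the side condition for the generalization step. Everything else is a direct application of the structural induction axiom together with completeness of first-order logic.
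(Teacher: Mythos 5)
Your argument matches the paper's: show \(\neg S\) is inductive via completeness applied to the two defining conditions, note that as the negation of a clause set it is logically equivalent to an \(\E{1}\) formula, and conclude via the corresponding instance of the \(\E{1}\) induction schema. The paper leaves the last step (passing to the equivalent \(\varphi\), generalizing, applying the axiom) implicit, while you spell it out, but it is the same proof.
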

Intuitively a refutation by a clause set cycle consists of a clause set cycle and a case distinction.
The case distinction with \(n \in \mathbb{N}\) cases can be formalized as follows:
\[ 
  C_n(x) \coloneqq \left(\bigvee_{i = 0}^{n - 1} x = \numeral{i}\right) \vee \exists y \, x = \Succ^n(y).
\]
Since \(C_{n}\) is clearly inductive and logically equivalent to an \(\E{1}\) formula, the formula \(C_{n}\) is provable with \(\E{1}\) induction.
So we have:
\begin{lem}
\label{lemma:induction-implies-case-distinction}
Let \(n \in \mathbb{N}\), then \(\theoryIndEL{1} \vdash C_n\).
\end{lem}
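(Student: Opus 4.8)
The plan is to show that \(C_n\) is inductive and then to apply the instance of the \(\E{1}\) induction axiom associated to it. First I would observe that \(C_n\) is logically equivalent to the \(\E{1}\) formula
\[
  \psi(x) \coloneqq \exists y \left( \left( \bigvee_{i = 0}^{n - 1} x = \numeral{i} \right) \vee x = \Succ^{n} y \right),
\]
obtained by moving the existential quantifier outward; hence \(\formulaInd{x}{\psi}\) belongs to \(\theoryIndEL{1}\), and since \(\psi\) and \(C_n\) are provably equivalent it suffices to prove that \(C_n\) is inductive, that is, \(\vdash C_n(\Zero)\) and \(C_n(x) \vdash C_n(\Succ x)\).

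For the base case: if \(n \geq 1\) then \(C_n(\Zero)\) holds because \(\Zero = \numeral{0}\) is one of its disjuncts, and if \(n = 0\) then \(C_0(\Zero)\) is \(\exists y \, \Zero = y\), which is logically valid. For the induction step I would argue by cases on which disjunct of \(C_n(x)\) is satisfied. If \(x = \numeral{i}\) for some \(i \leq n - 2\), then \(\Succ x = \numeral{i + 1}\) is again a disjunct of \(C_n(\Succ x)\). If \(x = \numeral{n - 1}\), then \(\Succ x = \Succ^{n}\Zero\), so the existential disjunct of \(C_n(\Succ x)\) holds with witness \(\Zero\). Finally, if \(x = \Succ^{n} y\) for some \(y\), then \(\Succ x = \Succ^{n}(\Succ y)\), so the existential disjunct of \(C_n(\Succ x)\) holds with witness \(\Succ y\). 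In each case \(C_n(\Succ x)\) follows, so \(C_n\), and therefore \(\psi\), is inductive.

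Applying \(\formulaInd{x}{\psi}\) in \(\theoryIndEL{1}\) then yields \(\theoryIndEL{1} \vdash \forall x \, \psi(x)\), and using the equivalence \(\vdash \forall x \, (\psi(x) \leftrightarrow C_n(x))\) we conclude \(\theoryIndEL{1} \vdash C_n\). The argument is routine and I do not expect a genuine obstacle; the only points needing slight care are the bookkeeping in the case split of the induction step --- in particular the passage from the bounded-disjunction part to the existential part when \(i = n - 1\) --- and the degenerate case \(n = 0\), in which the bounded disjunction is empty and \(C_0\) is trivially valid.
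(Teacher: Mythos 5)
Your proposal is correct and follows exactly the route the paper indicates in the sentence preceding the lemma (``Since \(C_n\) is clearly inductive and logically equivalent to an \(\E{1}\) formula, the formula \(C_n\) is provable with \(\E{1}\) induction''). You have simply spelled out the case analysis and the degenerate case \(n = 0\) that the paper leaves implicit.
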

Let now \(R(x)\) be a clause set refuted by a clause set cycle \(S(x)\).
Then there exists a natural number \(n \in \mathbb{N}\) such that \(R\) and \(S\) satisfy the conditions \eqref{cond:refcsc:1} and \eqref{cond:refcsc:2}.
We thus have
\begin{gather}
  \label{upper_bound_ground_case} \vdash \neg R(\numeral{i}), \  i = 0, 1, \dots, n - 1, \\
  \label{upper_bound_variable_case} \neg S(x) \vdash \neg R(\Succ^{n}(x)).
\end{gather}
By the Lemma above we can proceed in \(\theoryIndEL{1}\) by case distinction on the variable \(x\).
If \(x = \numeral{i}\), then we obtain \(\neg R(x)\) by \eqref{upper_bound_ground_case}.
Otherwise if \(x = \Succ^{n}(x')\) for some \(x'\), then by Proposition \ref{lemma:clause-set-cycles-are-unsatisfiable} and \eqref{upper_bound_variable_case} we have \(\neg R(\Succ^{n}(x'))\), thus \(\neg R(x)\).
We therefore obtain:
\begin{thm}
  \label{theorem:upper_bound}
  If a clause set \(R(x)\) is refuted by a clause set cycle, then
  \[\theoryIndEL{1} \vdash \neg R(x).\]
\end{thm}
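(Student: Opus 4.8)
The plan is to assemble ingredients that have already been prepared in the text. Assume $R(x)$ is refuted by a clause set cycle $S(x)$, so fix $n \in \mathbb{N}$ witnessing conditions \eqref{cond:refcsc:1} and \eqref{cond:refcsc:2}. First I would pass from semantic entailment to provability via completeness of first-order logic: condition \eqref{cond:refcsc:2} yields $\vdash \neg R(\numeral{i})$ for each $i \in \{0,\dots,n-1\}$, and condition \eqref{cond:refcsc:1} yields $\neg S(x) \vdash \neg R(\Succ^{n}(x))$; these are exactly the facts \eqref{upper_bound_ground_case} and \eqref{upper_bound_variable_case}.

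Next I would reason inside $\theoryIndEL{1}$. By Lemma \ref{lemma:induction-implies-case-distinction} the case-distinction formula $C_n$ is provable there, so I may split on $x$. In the finitely many cases $x = \numeral{i}$ with $i < n$, the desired $\neg R(x)$ follows directly from \eqref{upper_bound_ground_case}. In the remaining case $x = \Succ^{n}(x')$ for some $x'$: Proposition \ref{lemma:clause-set-cycles-are-unsatisfiable} gives $\theoryIndEL{1} \vdash \neg S(x')$, and combining this with \eqref{upper_bound_variable_case} gives $\neg R(\Succ^{n}(x'))$, i.e.\ $\neg R(x)$. Discharging the case split establishes $\theoryIndEL{1} \vdash \neg R(x)$.

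Since Proposition \ref{lemma:clause-set-cycles-are-unsatisfiable} and Lemma \ref{lemma:induction-implies-case-distinction} already carry all the induction-theoretic weight, there is essentially no obstacle here: the statement is a corollary of those two results together with the completeness observations. The only point needing care is bookkeeping — verifying that the case analysis can be carried out within the $\E{1}$ fragment, which is immediate because $\neg S$ is (equivalent to) an $\E{1}$ formula, $C_n$ is $\E{1}$, and the ground instances $\neg R(\numeral{i})$ are theorems of pure logic and hence available unconditionally.
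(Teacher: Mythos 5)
Your proposal is correct and follows essentially the same route as the paper: pass from entailment to provability by completeness to obtain \eqref{upper_bound_ground_case} and \eqref{upper_bound_variable_case}, invoke Lemma \ref{lemma:induction-implies-case-distinction} to perform the case distinction $C_n$ inside $\theoryIndEL{1}$, discharge the ground cases via \eqref{upper_bound_ground_case}, and handle the case $x = \Succ^n(x')$ by combining Proposition \ref{lemma:clause-set-cycles-are-unsatisfiable} with \eqref{upper_bound_variable_case}. The concluding remark about the $\E{1}$-fragment bookkeeping is a reasonable sanity check but not a point where the paper's argument differs.
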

Refutability by a clause set cycle is thus contained in the theory of \(\E{1}\) induction.
Therefore, methods for \ac{AITP} that are based on clause set cycles can not prove statements that require induction on formulas with two or more quantifier alternations.
This limitation is due to clause set cycles operating on clause sets instead of some larger set of formulas.
Similar limitations may apply to other \ac{AITP} methods that extend saturation-based provers by induction mechanisms that involve only clauses.

\section{Open Induction and Triangular numbers}
\label{sec:org54c1dd2}
In this section we will provide a finite, universal axiomatization of a theory of triangular numbers with open induction.
This finite axiomatization of the theory of triangular numbers will be used in Section \ref{sec:org567ac08}, to show that there exists a clause set that is refutable by a clause set cycle but that is not refutable by open induction.
The result presented in this section is a generalization of the finite axiomatization for multiplication-free arithmetic with open induction provided by Shoenfield in \cite{shoenfield1958}.

Let \(n \in \mathbb{N}\), then by \(\triangle_{n}\) we denote the \(n\)-th triangular number \(\sum_{i = 0}^n i = n(n+1)/2\).
By \(L_{\TriSymbol}\) we denote the one-sorted first-order language consisting of the function symbols \(\Zero/0\), \(\Succ/1\), \(\Pred/1\), \(+/2\), and the binary predicate symbol \(\TriSymbol\).
The predicate symbol \(\TriSymbol\) will be written in infix notation.
\begin{defi}
By \(\TDelta\) we denote the theory axiomatized by
\begin{align*}
 \forall x \, \Succ x & \neq \Zero. \label{A:1} \tag{$\mathrm{A1}$} \\
 \Pred \Zero & = \Zero. \label{A:2} \tag{$\mathrm{A2}$} \\
 \forall x \, \Pred \Succ x & = x. \label{A:3} \tag{$\mathrm{A3}$} \\
 \forall x \, x + \Zero & = x. \label{A:4} \tag{$\mathrm{A4}$} \\
 \forall x \, \forall y \, x + \Succ y & = \Succ ( x + y ). \label{A:5} \tag{$\mathrm{A5}$} \\
 {\Zero} & \ \TriSymbol \ {\Zero}. \label{A:6} \tag{$\mathrm{A6}$} \\
 \forall x \, \forall y \, ({x} \ \TriSymbol \ {y} & \rightarrow {\Succ x} \ \TriSymbol \ {\Succ x + y}). \label{A:7:1} \tag{$\mathrm{A7}$} \\
 \forall x \, \forall y \, ({\Succ x} \ \TriSymbol \ {\Succ x + y} & \rightarrow {x} \ \TriSymbol \ { y}). \label{A:7:2} \tag{$\mathrm{A8}$} \\
 \forall x \, \forall y \, \forall z \, ({x} \ \TriSymbol \ {y} \wedge {x} \ \TriSymbol \ {z} & \rightarrow y = z). \label{A:8} \tag{$\mathrm{A9}$}
\end{align*}
\end{defi}
The standard model for this theory is denoted by \(\mathbb{N}^{\TriSymbol}\). The model \(\mathbb{N}^{\TriSymbol}\) interprets the symbols \(\Zero, \Succ, \Pred, +\) in the natural way. The predicate symbol \(\TriSymbol\) is interpreted as the graph of the triangle function i.e.\ the function associating with each natural number \(n\) the triangular number \(\triangle_{n}\).
\begin{lem}
\label{lemma:B1-to-B4-provable-with-open-induction}
The theory \(\TDelta + \theoryIndOpen{L_{\TriSymbol}}\) proves the following formulas
\begin{align*}
\forall x \, (x \neq \Zero & \rightarrow x = \Succ \Pred x). \label{B:1} \tag{$\mathrm{B1}$} \\
\forall x \, x + y & = y + x. \label{B:2} \tag{$\mathrm{B2}$} \\
\forall x \, \forall y \, \forall z \, (x + y) + z & = x + (y + z). \label{B:3} \tag{$\mathrm{B3}$} \\
\forall x \, \forall y \, \forall z \, (x + y = x + z & \rightarrow y = z). \label{B:4} \tag{$\mathrm{B4}$}
\end{align*}
\end{lem}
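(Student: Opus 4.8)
The plan is to prove the four formulas in a convenient order, using that each follows from the previous ones together with a single application of open induction. I would first establish $\mathrm{B1}$, then $\mathrm{B3}$ (associativity), then $\mathrm{B2}$ (commutativity) — noting that $\mathrm{B2}$ actually needs a lemma stating $\Zero + x = x$ and $\Succ y + x = \Succ(y + x)$, each proved by open induction on $x$ using $\mathrm{A4}$, $\mathrm{A5}$ — and finally $\mathrm{B4}$ (cancellation). Crucially, all the induction formulas involved are quantifier-free (the universally quantified $y$, $z$ can be treated as free parameters, since $\formulaInd{x}{\psi}$ quantifies them on the outside as $\vec z$), so each step is a legitimate instance of $\theoryIndOpen{L_{\TriSymbol}}$.

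First I would prove $\mathrm{B1}$: $\forall x \, (x \neq \Zero \rightarrow x = \Succ\Pred x)$. Apply open induction to $\psi(x) :\equiv (x = \Zero \vee x = \Succ\Pred x)$. The base case $\psi(\Zero)$ is immediate. For the step, $\psi(\Succ x)$ follows from $\mathrm{A3}$ ($\Pred\Succ x = x$), so $\Succ\Pred\Succ x = \Succ x$; this does not even use the induction hypothesis. Then $\mathrm{B1}$ follows propositionally.

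Next, for associativity $\mathrm{B3}$ I would fix $x, y$ as parameters and run open induction on $z$ in $\psi(z) :\equiv (x + y) + z = x + (y + z)$: the base case is $\mathrm{A4}$ applied twice, and the induction step uses $\mathrm{A5}$ three times together with the hypothesis. For commutativity $\mathrm{B2}$ I first prove the two auxiliary identities $\Zero + x = x$ and $\Succ y + x = \Succ(y + x)$ by open induction on $x$ (base cases from $\mathrm{A4}$, steps from $\mathrm{A5}$), and then prove $\psi(x) :\equiv x + y = y + x$ by open induction on $x$: the base case is $\Zero + y = y = y + \Zero$ using the first auxiliary identity and $\mathrm{A4}$, and the step uses $\mathrm{A5}$, the induction hypothesis, and the second auxiliary identity. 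Finally for cancellation $\mathrm{B4}$, with $x$ the induction variable and $y, z$ parameters, I run open induction on $x$ in $\psi(x) :\equiv (x + y = x + z \rightarrow y = z)$: the base case uses the auxiliary identity $\Zero + x = x$, and the induction step uses $\mathrm{A5}$ together with $\mathrm{A1}$/$\mathrm{A3}$ (from $\Succ(x+y) = \Succ(x+z)$ apply $\Pred$ and $\mathrm{A3}$ to get $x + y = x + z$, then the hypothesis).

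The main obstacle is essentially bookkeeping rather than conceptual: one must be careful that the induction formulas are genuinely quantifier-free (so the auxiliary lemmas $\Zero + x = x$ and $\Succ y + x = \Succ(y+x)$ must be proved first, since one cannot directly induct on a formula containing $y + x$ unless $y$ is a parameter — which it is), and one must order the four claims so that each has its prerequisites available. No single step is hard; the care lies in checking that the standard Peano-style arguments for $\mathrm{B2}$–$\mathrm{B4}$ go through using only open induction in the language $L_{\TriSymbol}$, which they do because the $\TriSymbol$-axioms $\mathrm{A6}$–$\mathrm{A9}$ play no role here.
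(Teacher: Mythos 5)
The paper states this lemma without proof (it is a routine Peano-style derivation), so there is no proof to compare against; your argument is the standard one and is correct. One small slip: in the $\mathrm{B4}$ induction step you cite $\mathrm{A5}$, but to rewrite $\Succ x + y$ as $\Succ(x+y)$ (induction is on the \emph{first} argument) you actually need the auxiliary identity $\Succ y + x = \Succ(y+x)$ that you proved earlier (or $\mathrm{B2}$ together with $\mathrm{A5}$), and $\mathrm{A1}$ plays no role there — only $\mathrm{A3}$ is used for injectivity of $\Succ$.
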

The axiom \eqref{A:8} is not redundant for the axiomatization of the theory \(\TDelta + \theoryIndOpen{L_{\TriSymbol}}\).
This can be seen by taking the standard interpretation of the symbols \(\Zero\), \(\Succ\), \(\Pred\), \(+\) and by interpreting \(\TriSymbol\) as the set \(\{(n, \triangle_{n}) \mid n \in \mathbb{N}\} \cup \{ (n, \triangle_{n} + 1) \mid n \in \mathbb{N}\}\).
\begin{defi}
By \(\TDeltaI\) we denote the theory \(\TDelta + \mathrm{B1} + \mathrm{B2} + \mathrm{B3} + \mathrm{B4}\).
\end{defi}
Our axiomatization of the theory \(\TDeltaI\) is not minimal, because \eqref{A:3} can be derived from the other axioms.
Another simple but important observation is that for every formula \(\varphi(y)\) of the language \(L_{\TriSymbol}\) we have
\begin{equation}
  \label{equiv:1}
  \TDeltaI \vdash \varphi(\Succ\Pred x) \leftrightarrow ( x = \Zero \wedge \varphi(\Succ\Zero)) \vee (x \neq \Zero \wedge \varphi(x)).
\end{equation}
We will now show that every formula in \(\TDeltaI\) is equivalent to some formula that is ``simple'' in the following sense.
\begin{defi}
  We call a term or a formula simple if it does not contain the symbol \(\Pred\).
\end{defi}
In the theory \(\TDeltaI\) simple terms have a convenient equivalent representation.
\begin{lem}
  \label{lem:5}
  Let \(t(x_{1}, \dots, x_{n})\) be a simple term, then there exist \(m_{1}, \dots, m_{n}, k \in \mathbb{N}\) such that \(\TDeltaI \vdash t = m_{1}x_{1} + \dots + m_{n}x_{n} + \numeral{k}\).
\end{lem}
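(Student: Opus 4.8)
The plan is to proceed by structural induction on the simple term $t$. There are three cases to consider according to the definition of terms: variables, applications of $\Zero$ or $\Succ$, and applications of $+$. (The symbol $\Pred$ cannot occur, since $t$ is simple, so there is no case for it; this is exactly why the restriction to simple terms is needed.) In the base cases, if $t = x_j$ for some variable, then we take $m_j = 1$, all other $m_i = 0$, and $k = 0$; the required identity $\TDeltaI \vdash x_j = 0\cdot x_1 + \dots + 1\cdot x_j + \dots + 0 \cdot x_n + \numeral{0}$ follows from \eqref{A:4} together with the derived commutativity \eqref{B:2} and associativity \eqref{B:3} of Lemma \ref{lemma:B1-to-B4-provable-with-open-induction}, after unfolding the abbreviations $m x$ and $\numeral{k}$. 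If $t = \Zero$, take all $m_i = 0$ and $k = 0$, again using \eqref{A:4}.

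For the successor case, suppose $t = \Succ s$ where $s$ is simple and, by the induction hypothesis, $\TDeltaI \vdash s = m_1 x_1 + \dots + m_n x_n + \numeral{k}$. Then $\TDeltaI \vdash \Succ s = \Succ(m_1 x_1 + \dots + m_n x_n + \numeral{k})$, and since $\numeral{k+1} = \Succ \numeral{k}$, it suffices to show that $\TDeltaI$ proves $\Succ(u + \numeral{k}) = u + \numeral{k+1}$ for $u = m_1 x_1 + \dots + m_n x_n$; this is a direct consequence of \eqref{A:5}. Hence the coefficients $m_1, \dots, m_n$ and the constant $k+1$ witness the claim for $\Succ s$.

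For the addition case, suppose $t = s_1 + s_2$ with $s_1, s_2$ simple, and by the induction hypothesis $\TDeltaI \vdash s_1 = m_1 x_1 + \dots + m_n x_n + \numeral{k}$ and $\TDeltaI \vdash s_2 = m_1' x_1 + \dots + m_n' x_n + \numeral{k'}$. Then $\TDeltaI \vdash s_1 + s_2 = (m_1 x_1 + \dots + m_n x_n + \numeral{k}) + (m_1' x_1 + \dots + m_n' x_n + \numeral{k'})$, and using commutativity \eqref{B:2} and associativity \eqref{B:3} we may rearrange the right-hand side into $(m_1 + m_1')x_1 + \dots + (m_n + m_n')x_n + \numeral{k + k'}$, where collecting the $x_i$-summands uses the evident fact that $\TDeltaI \vdash (a\,x_i) + (b\,x_i) = (a+b)x_i$ for concrete $a, b \in \mathbb{N}$ (provable by a finite computation from \eqref{B:3}, or by a side induction on $b$ using \eqref{A:4} and \eqref{A:5}), and collecting the numerals uses $\TDeltaI \vdash \numeral{k} + \numeral{k'} = \numeral{k + k'}$, which follows from \eqref{A:4} and \eqref{A:5} by induction on $k'$ in the meta-theory. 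So the coefficients $m_i + m_i'$ and the constant $k + k'$ witness the claim.

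The only mildly delicate points are the bookkeeping lemmas used to merge like summands, namely $\TDeltaI \vdash (a\,x) + (b\,x) = (a+b)x$ and $\TDeltaI \vdash \numeral{k} + \numeral{k'} = \numeral{k+k'}$, together with the fact that the abbreviation $t_1 + \dots + t_n$ is left-associated and so reassociating it into a canonical shape genuinely requires \eqref{B:3}. None of these requires open induction inside the object theory — they are all provable in $\TDeltaI$ already (indeed using only \eqref{A:4}, \eqref{A:5}, \eqref{B:2}, \eqref{B:3}) — so the main induction is a plain structural induction carried out in the meta-theory, and I expect no real obstacle beyond making the rearrangement of sums precise.
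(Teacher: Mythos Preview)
Your proposal is correct and takes essentially the same approach as the paper: the paper's proof is a single sentence stating that $t$ can be rewritten using \eqref{A:4}, \eqref{A:5}, \eqref{B:2}, and \eqref{B:3} into the desired form, and your structural induction on $t$ is precisely the detailed unfolding of that rewriting, invoking exactly the same four axioms. The bookkeeping lemmas you isolate (merging like summands and numerals) are implicit in the paper's terse ``can be rewritten'' but are handled correctly in your account.
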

\begin{proof}
  The term \(t\) can be rewritten with \(\eqref{A:4}\), \(\eqref{A:5}\), \(\eqref{B:2}\), and \(\eqref{B:3}\) into the form \(m_{1}x_{1} + \dots + m_{n}x_{n} + \numeral{k}\).
\end{proof}
Whenever we are working in the context of the theory \(\TDeltaI\) we will often and implicitly assume that a simple term is of the form \(m_{1}x_{1} + \dots + m_{n}x_{n} + \numeral{k}\).
\begin{prop}
  \label{proposition:formula_equivalent_to_simple_formula}
Let \(\varphi\) be a formula, then there exists a simple formula \(\psi\) such that \(\TDeltaI \vdash \varphi \leftrightarrow \psi\).
\end{prop}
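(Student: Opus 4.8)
The plan is to eliminate occurrences of the symbol $\Pred$ from $\varphi$ one at a time, working from the innermost $\Pred$-subterm outwards, using the equivalence \eqref{equiv:1} as the main engine. The proof proceeds by structural induction on $\varphi$, but the real work happens at the atomic level: once we know every atom is $\TDeltaI$-equivalent to a simple formula, the propositional and quantifier cases are immediate (a Boolean combination or quantification of simple formulas is already simple). So the crux is: given an atom $a$ (either $s = t$ or $s \TriSymbol t$ with $s,t$ arbitrary terms of $L_{\TriSymbol}$), produce a simple $\psi$ with $\TDeltaI \vdash a \leftrightarrow \psi$.

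First I would handle a single $\Pred$-subterm. Suppose $a$ contains an occurrence of $\Pred r$ where $r$ is itself simple (such an innermost occurrence always exists if $a$ is not already simple). Write $a$ as $a[\Pred r]$, viewing the distinguished occurrence as a hole. By a substitution instance of \eqref{equiv:1} — applied with the formula $\varphi(\cdot) := a[\cdot]$ and with $x := r$ — we get
\[
  \TDeltaI \vdash a[\Pred r] \leftrightarrow (r = \Zero \wedge a[\Pred\Zero]) \vee (r \neq \Zero \wedge a[\text{the ``}\Succ\Pred r\text{'' case}]).
\]
More precisely, applying \eqref{equiv:1} to $\varphi(y) := a[y]$ at the point $\Succ\Pred r$, one rewrites $a[\Succ\Pred r]$; but we want to rewrite $a[\Pred r]$ directly. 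The cleaner route is to use \eqref{B:1}: in the case $r \neq \Zero$ we have $r = \Succ\Pred r$, hence $\Pred r = \Pred\Succ\Pred r = \Pred r$ — which is circular — so instead one observes that \eqref{equiv:1} already packages exactly the needed case split: $\Pred r$ is replaced by a fresh ``value'' that equals $\Zero$ when $r = \Zero$ and is the unique predecessor of $r$ when $r \neq \Zero$. Concretely, in the branch $r = \Zero$ substitute $\Pred r \mapsto \Zero$ (justified by \eqref{A:2}); in the branch $r \neq \Zero$, introduce an existential witness: $a[\Pred r]$ is $\TDeltaI$-equivalent, under $r \neq \Zero$, to $\exists u\,(r = \Succ u \wedge a[u])$, where $a[u]$ now has one fewer $\Pred$ if we chose $r$ innermost and $u$ is a fresh variable. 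This is sound because \eqref{B:1} gives a witness and $\eqref{A:1},\eqref{A:3}$ give its uniqueness. Pushing this through for all atoms and then taking the Boolean/quantifier closure completes the induction on the total number of $\Pred$-occurrences in $\varphi$.

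Then I would clean up the residual non-simplicity. After the step above, each atom of the form $s = t$ or $s \TriSymbol t$ has both arguments simple, and by Lemma~\ref{lem:5} each such argument is $\TDeltaI$-provably of the form $m_1 x_1 + \dots + m_n x_n + \numeral{k}$; substituting these canonical forms keeps the atom simple and the equivalence holds. The newly introduced existential quantifiers and the equalities $r = \Succ u$ are themselves simple (no $\Pred$), so nothing reintroduces $\Pred$. Finally, collecting everything: by induction on the structure of $\varphi$, with the atomic case handled as above, we obtain a simple $\psi$ with $\TDeltaI \vdash \varphi \leftrightarrow \psi$, as required.

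**The main obstacle** I expect is bookkeeping rather than conceptual: making precise the ``rewrite the innermost $\Pred$ occurrence'' step so that the measure (number of $\Pred$ symbols) strictly decreases, while being careful that the case split introduced by \eqref{equiv:1} — and the fresh existential witnesses — do not themselves carry hidden $\Pred$'s. One must check that when $\Pred r$ is innermost, the term $r$ is simple, so the substituted branches $a[\Zero]$ and $a[u]$ contain strictly fewer $\Pred$'s than $a[\Pred r]$; this is what drives the induction. A secondary, purely notational nuisance is that $a$ may contain several occurrences of $\Pred r$ for the same $r$ and possibly nested $\Pred$'s; handling all occurrences of the chosen innermost subterm simultaneously (rather than one at a time) avoids an order-of-elimination subtlety but requires stating the substitution lemma for multiple holes. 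None of this is deep, but it is where a careful write-up must spend its effort.
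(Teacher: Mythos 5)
Your proof reaches the proposition as literally stated, but by a genuinely different route that yields a strictly weaker result than the paper's proof, and the difference matters downstream.

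The point you identify as an obstacle---how to make a $\Succ$ appear immediately above $\Pred r$ so that \eqref{equiv:1} applies, without falling into the circularity you note---is precisely where the paper's trick lives. The paper first \emph{boosts the whole atom by $\Succ$}: for a triangle atom $\Tri{t_1}{t_2}$ it uses \eqref{A:7:1} and \eqref{A:7:2} to pass to the $\TDeltaI$-equivalent $\Tri{\Succ t_1}{\Succ t_1 + t_2}$; for an equational atom $t_1 = t_2$ it uses $\Succ$-injectivity to pass to $\Succ t_1 = \Succ t_2$. The extra $\Succ$ is then pushed inward with \eqref{A:5}, \eqref{B:3}, \eqref{B:4} until it sits directly above a $\Pred$, at which point \eqref{equiv:1} replaces the atom by a Boolean combination of atoms each containing strictly fewer $\Pred$'s; a lexicographic measure on atoms makes this terminate. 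The crucial by-product is that this rewriting is purely propositional: it never introduces a quantifier, so a quantifier-free $\varphi$ yields a quantifier-free simple $\psi$.

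Your construction instead replaces $a[\Pred r]$ by $(r = \Zero \wedge a[\Zero]) \vee (r \neq \Zero \wedge \exists u\,(r = \Succ u \wedge a[u]))$, introducing a fresh existential quantifier at each elimination step. This is a sound equivalence in $\TDeltaI$ and does produce a simple formula, so the proposition as worded is established. However, in Theorem \ref{theorem:extended_shoenfield_theorem} the proposition is applied to a \emph{quantifier-free} formula $\theta$, and the ensuing argument inspects the truth values of the \emph{atoms} of the resulting formula as a function of the single induction variable, using Proposition \ref{proposition:simple_atoms_are_valid_or_have_finitely_many_comparable_solutions} to push the variable far enough up or down a chain that all invalid atoms become false simultaneously. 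That analysis does not go through once nested existential quantifiers are present, because the truth value of a quantified subformula does not stabilize the way an atom's does. So your proof, though correct for the statement as isolated, would not support the use the paper makes of it; the implicit invariant you are missing---and what the boost-by-$\Succ$ device buys---is that $\Pred$-elimination can be done \emph{without leaving the quantifier-free fragment}.
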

\begin{proof}
  We start with an important observation that we shall repeatedly use throughout the proof.
  Let \(t\) be a term containing the symbol \(\Pred\).
  Now we work in the theory \(\TDeltaI\).
  Then by using \eqref{A:5}, \eqref{B:3}, and \eqref{B:4}, it is possible to permute the symbol \(\Succ\) in the term \(\Succ(t)\) inwards until it is directly above an occurrence of the symbol \(\Pred\).

  The proof consists of three main steps.
  First, we eliminate occurrences of \(\Pred\) in the left-hand arguments of triangle atoms.
  After that, we eliminate in the resulting formula all occurrences of \(\Pred\) in right-hand arguments of triangle atoms without introducing \(\Pred\) to left-hand arguments of triangle atoms.
  Finally, we eliminate the symbol \(\Pred\) from equational atoms without introducing \(\Pred\) to the triangle atoms.

  Let \(\psi\) be a formula.
  We will now show that there exists a formula \(\psi'\) such that \(\TDeltaI \vdash \psi \leftrightarrow \psi'\) and moreover \(\psi'\) does not contain an occurrence of the symbol \(\Pred\) in the left-hand arguments of triangle atoms.
  We start by defining a measure on formulas that will make the argument more apparent.
  Let us define \(\#_{1}^{n}(\psi)\) to be the number of triangle atoms in \(\psi\) having exactly \(n\) occurrences of \(\Pred\) in their left-hand argument.
  We furthermore let
  \[
    N_{1}^{\psi} \coloneqq \max \{ n \in \mathbb{N} \mid \#_{1}^{n}(\psi) \neq 0\}.
  \]
  Now we can define the measure \(\#_{1}(\psi) \coloneqq (N_{1}^{\psi}, \#_{1}^{N_{1}^{\psi}}(\psi))\).
  We proceed by induction on the measure \(\#_{1}(\psi)\) with respect to the natural order on pairs of natural numbers.
  If \(\#_{1}(\psi) = (0, m)\) for some \(m \in \mathbb{N}\), then \(\psi\) does not contain the symbol \(\Pred\) in the left-hand side argument of its triangle atoms.
  Therefore \(\psi\) is already the desired formula.
  Otherwise if \(N_{1}^{\psi} \neq 0\), then \(\#_{1}(\psi) = (N_{1}^{\psi}, \#_{1}^{N_{1}^{\psi}}(\psi))\) with \(\#_{1}^{N_{1}^{\psi}}(\psi) \neq 0\).
  Hence there exists a triangle atom \(\varphi \coloneqq \Tri{t_{1}}{t_{2}}\) of \(\psi\) having exactly \(N_{1}^{\psi}\) occurrences of \(\Pred\) in its left-hand argument.
  Let \(\varphi' \coloneqq \Tri{\Succ(t_{1})}{\Succ(t_{1}) + t_{2}}\), then by \eqref{A:7:1} and \eqref{A:7:2} we have \(\TDeltaI \vdash \varphi \leftrightarrow \varphi'\).
  Obtain a \(\TDeltaI\) equivalent atom \(\varphi^{\prime\prime}\) from \(\varphi'\) by permuting in the left-hand argument \(\Succ(t_{1})\) the symbol \(\Succ\) inwards, as described above, until it is directly above an occurrence of \(\Pred\).
  Now apply \eqref{equiv:1} to \(\varphi^{\prime\prime}\) in order to obtain a \(\TDeltaI\) equivalent formula \(\varphi^{\prime\prime\prime}\).
  Note that \(\varphi^{\prime\prime\prime}\) has four atoms---two equational atoms and two triangle atoms.
  The triangle atoms both contain exactly \(N_{1}^{\psi}-1\) occurrences of \(\Pred\) in their left-hand arguments.
  Let \(\psi'\) be obtained from \(\psi\) by replacing \(\varphi\) by \(\varphi^{\prime\prime\prime}\).
  We then have \(\TDeltaI \vdash \psi' \leftrightarrow \psi\).
  If \(N_{1}^{\psi'} = N_{1}^{\psi}\), then by the above we have \(\#_{1}^{N_{1}^{\psi'}}(\psi') = \#_{1}^{N_{1}^{\psi}}(\psi') = \#_{1}^{N_{1}^{\psi}}(\psi) - 1\). Hence, \(\#_{1}(\psi') < \#_{1}(\psi)\).
  Otherwise we have \(N_{1}^{\psi'} < N_{1}^{\psi}\) and therefore  \(\#_{1}(\psi') < \#_{1}(\psi)\).
  In any case we have \(\#_{1}(\psi') < \#_{1}(\psi)\), thus we obtain the desired formula by applying the induction hypothesis to \(\psi'\).

  By the previous step we can now assume that we work with formulas that do not contain \(\Pred\) in the left-hand arguments of triangle atoms.
  We will now eliminate all the occurrences of \(\Pred\) in right hand arguments of triangle atoms.
  To accomplish this we proceed as in the first step by defining \(\#_{2}^{n}(\psi)\), \(N_{2}^{\psi}\), and \(\#_{2}(\psi)\) in analogy to \(\#_{1}^{n}(\psi)\), \(N_{1}^{\psi}\), and \(\#_{1}(\psi)\) for a formula \(\psi\).
  For a formula \(\psi\) without \(\Pred\) in the left-hand arguments of triangle atoms we proceed by induction on \(\#_{2}(\psi)\) as in the first step.
  If \(N_{2}^{\psi} = 0\), then \(\psi\) does not contain \(\Pred\) its triangle atoms.
  Thus \(\psi\) is the desired formula.
  Otherwise if \(N_{2}^{\psi} \neq 0\), then \(\#_{2}^{N_{2}^{\psi}}(\psi) \neq 0\), that is, there exists an atom \(\varphi \coloneqq \Tri{t_{1}}{t_{2}}\) of \(\psi\) with exactly \(N_{2}^{\psi}\) occurrences of \(\Pred\) in \(t_{2}\).
  We let \(\varphi' \coloneqq \Tri{\Succ(t_{1})}{\Succ(t_{1} + t_{2})}\).
  Then \(\TDeltaI \vdash \varphi \leftrightarrow \varphi'\).
  Since \(t_{1}\) is assumed to be free of \(\Pred\), the term \(\Succ(t_{1} + t_{2})\) contains \(N_{2}^{\psi}\) occurrences of \(\Pred\).
  Now we proceed in analogy to the first step by moving the outermost \(\Succ\) inwards, thus allowing us to eliminate one occurrence of \(\Pred\) by making use of the equivalence \(\eqref{equiv:1}\).
  We therefore obtain a formula \(\psi'\) with \(\TDeltaI \vdash \psi \leftrightarrow \psi'\) and \(\#_{2}(\psi') < \#_{2}(\psi)\).
  
  We can now assume that formulas we work with do not contain \(\Pred\) in their triangle atoms.
  It remains to show that we can eliminate the occurrences of \(\Pred\) from equational atoms without introducing triangle atoms containing \(\Pred\).
  Let \(\psi\) be a formula, then we define \(\#_{3}^{n}(\psi)\) to be the number of equational atoms of \(\psi\) containing exactly \(n\) occurrences of \(\Pred\).
  We moreover define \(N_{3}^{\psi}\) and \(\#_{3}(\psi)\) in analogy to \(N_{1}^{\psi}\) and \(\#_{1}(\psi)\).
  The induction base is again trivial.
  For the induction step, we have \(N_{3}^{\psi} \neq 0\) and \(\#_{3}^{N_{3}^{\psi}}(\psi) \neq 0\).
  Therefore there exists an equational atom \(\varphi \coloneqq (t_{1} = t_{2})\) in \(\psi\) with exactly \(N_{3}^{\psi}\) occurrences of \(\Pred\).
  We proceed similarly to the previous two steps above, by first replacing an the atom \(\varphi\) by the \(\TDeltaI\) equivalent atom \(\Succ(t_{1}) = \Succ(t_{2})\), then moving \(s\) inwards, and finally applying the equivalence \eqref{equiv:1}.
  Let \(\psi'\) be the resulting formula.
  It is easy to see that this does not introduce triangle atoms containing \(\Pred\) and moreover we have \(\#_{3}(\psi') < \#_{3}(\psi)\).
  We can thus apply the induction hypothesis in order to obtain the desired formula.
\end{proof}
At this point the reader might wonder why the authors chose to include a function symbol for the predecessor function and go through the technicalities of eliminating the symbol \(\Pred\) from formulas instead of providing additional axioms.
The idea is to work with purely universal axiomatizations so that one can in particular apply Herbrand's theorem.
In order to avoid the symbol \(\Pred\) we would have to include an axiom containing an existential quantifier such as \(\forall x \, \exists y \, (( x = \Zero \wedge y = \Zero) \vee (x \neq \Zero \wedge x = \Succ(y)))\).
As a consequence, we would later run into similar technicalities when dealing with the this existential quantifier.
In fact, the function symbol \(\Pred\) is just a Skolem function introduced for the axiom given above. 

We have now everything at hand to start with the model theoretic considerations of the theory \(\TDeltaI\).
In the following we fix an arbitrary model \(\mathcal{M}\) of the theory \(\TDeltaI\).
Our aim is to show that \(\mathcal{M}\) is also a model of open induction over the language \(L_{\TriSymbol}\).
By \(\ZeroInter\), \(\SuccInter\), \(\PredInter\), \(\PlusInter\), and \(\TriInterSymbol\) we denote the respective interpretations of the symbols \(\Zero\), \(\Succ\), \(\Pred\), \(+\), and \(\TriSymbol\) in the model \(\mathcal{M}\).
We start with a few simple observations about the structure of the model \(\mathcal{M}\).
\begin{defi}
  \label{definition:comparability}
  \label{definition:less_than}
  Let \(a,b \in \mathcal{M}\), then we define \(a \prec b\) if there exists \(n \geq 1\) such that \(\SuccInter^{n}a = b\).
  Accordingly we define \(a \preceq b\) if \(a \prec b\) or \(a = b\).
  We say that \(a\) and \(b\) are comparable (in symbols \(a \sim b\)), if \(a \preceq b\) or \(b \preceq a\).
\end{defi}
It is not hard to see that the relation \(\preceq\) is a partial order and that \(\sim\) is an equivalence relation.
Let \(a\) be an element of \(\mathcal{M}\), then by \([a]\) we denote the equivalence class of \(a\) under \(\sim\).
It is easy to see that \(\preceq\) is total on \([a]\).
Hence classes of comparable elements together with \(\preceq\) form chains.
Let us now look a bit more closely at these chains.
As a simple consequence of \eqref{B:4} we have \(\forall x \, x + \numeral{k} \neq x\) for all \(k \geq 1\).
Therefore the chains of comparable elements are infinite.
Consider now the class of elements comparable with \(\ZeroInter\) and let \(a\) be an element comparable with \(\ZeroInter\).
Then by \eqref{A:1} we have \(a = \SuccInter^{m}\ZeroInter\) for some natural number \(m\), that is, \(\ZeroInter\) is the least element of this chain. The chain of elements comparable with \(\ZeroInter\) thus looks as follows:
\[
  \ZeroInter \prec \SuccInter^{1}\ZeroInter \prec \SuccInter^{2} \ZeroInter \prec \SuccInter^{3}\ZeroInter \prec \dots.
\]
This chain is isomorphic to the chain of natural numbers with \(\leq\) and the addition.
This is why we will call the elements comparable with \(\ZeroInter\), the standard elements (of \(\mathcal{M}\)).
Elements that do not belong to this chain are called non-standard elements (of \(\mathcal{M}\)).

Before we have a look at the structure of a chains of non-standard elements let us summarize some basic properties of \(\mathcal{M}\).
\begin{lem}
  Let \(a, b\) be elements of \(\mathcal{M}\)
  \label{lem:1}
  \begin{enumerate}
  \item If \(a\) is a non-standard element, then \(\PredInter a\) is a non-standard element.
  \item If \(a\) is a non-standard element and \(m \in \mathbb{N}\), then \(a = \SuccInter^{m}\PredInter^{m}a\).
  \item The element \(a \PlusInter b\) is a standard element if and only if \(a\) and \(b\) are standard elements.
  \end{enumerate}
\end{lem}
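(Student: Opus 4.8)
The plan is to prove the three items in order, relying on axioms \eqref{A:1}, \eqref{A:3}, \eqref{A:4}, \eqref{A:5}, and \eqref{B:1}, on the injectivity of $\SuccInter$ in $\mathcal{M}$ (immediate from \eqref{A:3}), and on the characterisation already noted above that the standard elements of $\mathcal{M}$ are exactly those of the form $\SuccInter^{m}\ZeroInter$ with $m \in \mathbb{N}$.

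For item~(1) I would argue by contraposition: if $\PredInter a$ is standard, say $\PredInter a = \SuccInter^{m}\ZeroInter$, then a non-standard $a$ is in particular distinct from $\ZeroInter$, so \eqref{B:1} gives $a = \SuccInter\PredInter a = \SuccInter^{m+1}\ZeroInter$, making $a$ standard. For item~(2) I would induct on $m$ in the metatheory: the base case $m = 0$ is trivial, and in the step I use item~(1), iterated, to see that $\PredInter^{m}a$ is still non-standard, hence nonzero, so \eqref{B:1} yields $\PredInter^{m}a = \SuccInter\PredInter^{m+1}a$; substituting this into the induction hypothesis $a = \SuccInter^{m}\PredInter^{m}a$ gives $a = \SuccInter^{m+1}\PredInter^{m+1}a$.

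For item~(3), the implication from ``$a,b$ standard'' to ``$a \PlusInter b$ standard'' is a routine computation: writing $a = \SuccInter^{k}\ZeroInter$ and $b = \SuccInter^{l}\ZeroInter$, an induction on $l \in \mathbb{N}$ using \eqref{A:4} and \eqref{A:5} shows $a \PlusInter b = \SuccInter^{l}a = \SuccInter^{k+l}\ZeroInter$. For the converse I argue contrapositively: suppose one of $a,b$ is non-standard and, using commutativity \eqref{B:2}, assume without loss of generality that $a$ is non-standard; I then show $a \PlusInter b$ is non-standard. Suppose otherwise, so $a \PlusInter b = \SuccInter^{k}\ZeroInter$ for some $k$. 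If $b$ is standard, say $b = \SuccInter^{l}\ZeroInter$, then as above $a \PlusInter b = \SuccInter^{l}a$, and since $a \preceq \SuccInter^{l}a$ while $\SuccInter^{l}a \sim \ZeroInter$ and $\sim$ is transitive, $a$ would be standard --- a contradiction. If $b$ is non-standard, item~(2) lets me write $b = \SuccInter^{k}\PredInter^{k}b$ with the \emph{same} exponent $k$, so \eqref{A:5} (applied $k$ times) gives $a \PlusInter b = \SuccInter^{k}(a \PlusInter \PredInter^{k}b)$; cancelling the $k$ leading successors by injectivity of $\SuccInter$ yields $a \PlusInter \PredInter^{k}b = \ZeroInter$. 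But $\PredInter^{k}b$ is non-standard by item~(1), hence distinct from $\ZeroInter$, so \eqref{B:1} and \eqref{A:5} give $a \PlusInter \PredInter^{k}b = \SuccInter(a \PlusInter \PredInter^{k+1}b)$, which by \eqref{A:1} cannot equal $\ZeroInter$ --- again a contradiction.

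The one genuinely delicate step is the non-standard sub-case of item~(3): the point is to peel off from $b$ exactly as many successors as occur in the putative standard value $\SuccInter^{k}\ZeroInter$ of $a \PlusInter b$, which is precisely what item~(2) supplies, and then to observe, via item~(1), that the remainder $\PredInter^{k}b$ is still non-standard and hence nonzero, so that $a \PlusInter \PredInter^{k}b$ cannot be $\ZeroInter$. Everything else reduces to bookkeeping with the recursion equations \eqref{A:4}, \eqref{A:5} for $+$ and with injectivity of $\SuccInter$.
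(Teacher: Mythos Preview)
Your argument is correct, and items~(1), (2) and the forward direction of~(3) coincide with the paper's proof (the paper is simply terser).

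For the converse of item~(3) the paper takes a shorter route that avoids your case distinction on $b$. Having assumed without loss of generality that $a$ is non-standard, the paper applies item~(2) to $a$ itself (not to $b$) with exponent $m+1$, one more than the exponent in the putative standard value $\SuccInter^{m}\ZeroInter$ of $a \PlusInter b$. Then \eqref{B:2} and \eqref{A:5} give
\[
\SuccInter^{m}\ZeroInter \;=\; b \PlusInter a \;=\; b \PlusInter \SuccInter^{m+1}\PredInter^{m+1}a \;=\; \SuccInter^{m+1}\bigl(b \PlusInter \PredInter^{m+1}a\bigr),
\]
and cancelling $m$ successors via \eqref{A:3} yields $\ZeroInter = \SuccInter(\,\cdot\,)$, contradicting \eqref{A:1} directly. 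The point is that since you already know $a$ is non-standard, you can peel successors off $a$ regardless of whether $b$ is standard, so the sub-case analysis on $b$ is unnecessary. Your version is not wrong, just slightly longer; applying item~(2) to the element you already assumed non-standard is what buys the simplification.
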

\begin{proof}
  For (1) let \(a\) be a non-standard element of \(\mathcal{M}\), then \(a \neq \ZeroInter\). Thus by \eqref{B:1} the element \(\PredInter a\) is also a non-standard element.

  For (2) if \(a\) is non-standard, then \(a \neq 0\) and by \eqref{B:1} we have \(a = \SuccInter\PredInter a\). By (1) the element \(\PredInter a\) is non-standard, so we have \(\PredInter = \SuccInter\PredInter(\PredInter a)\) thus \(a = \SuccInter^{2}\PredInter^{2}a\) and so on.
  
  For (3) consider now an element of the form \(a \PlusInter b\).
If \(a\) and \(b\) are both standard elements, then it is clear that \(a \PlusInter b\) is a standard element.
Now suppose that \(a \PlusInter b\) is a standard element and suppose without loss of generality that \(a\) is not a standard element.
Then there exists \(m \in \mathbb{N}\) such that \(\SuccInter^{m}\ZeroInter = a \PlusInter b = b \PlusInter a = b \PlusInter \SuccInter^{m + 1}\PredInter^{m+1}a\).
By \eqref{A:3} we obtain \(\ZeroInter = \SuccInter\PredInter^{m+1}a\) which contradicts \eqref{A:1}.
Hence \(a\) must also be a standard element.  
\end{proof}
Let us now consider the chain of elements comparable with a non-standard element \(a\) of \(\mathcal{M}\).
Let \(n, m\) be natural numbers with \(n < k\), then by Lemma \ref{lem:1} we have \(\SuccInter^{n}\PredInter^{k}a = \PredInter^{k -n}a\).
Hence we have \(\PredInter^{k}a \prec \PredInter^{k -n}a\).
Let \(b\) be comparable with \(a\), then either \(a = \SuccInter^{m}b\) or \(b = \SuccInter^{m}a\) i.e.\ \(b = \PredInter^{m}a\) or \(b = \SuccInter^{m}a\) for some natural number \(m\).
The chain thus has the following structure:
\[
  \dots \prec \PredInter^{3}a \prec \PredInter^{2}a \prec \PredInter^{1}a \prec a \prec \SuccInter^{1}a \prec \SuccInter^{2}a \prec \SuccInter^{3}a \prec \dots.
\]
The chain of elements comparable with \(a\) is isomorphic to the integers with the order \(<\).

We define the language \(L_{\TriSymbol}(\mathcal{M})\) to be the extension of the language \(L_{\TriSymbol}\) by a constant symbol \(a\) for every element \(a\) of \(\mathcal{M}\).
We let the \(L_{\TriSymbol}\) structure \(\mathcal{M}\) interpret \(L_{\TriSymbol}(\mathcal{M})\) formulas by interpreting for every element \(a\) of \(\mathcal{M}\) the constant \(a\) as itself.
The language \(L_{\TriSymbol}(\mathcal{M})\) will be especially convenient when we need to insert elements of \(\mathcal{M}\) into \(L_{\TriSymbol}\) terms and \(L_{\TriSymbol}\) formulas.
Let \(\varphi(x)\) be an \(L_{\TriSymbol}(\mathcal{M})\) formula.
We call an element \(a\) of \(\mathcal{M}\) a solution of \(\varphi\) if \(\varphi(a)\) is true in \(\mathcal{M}\).
Similarly we call \(\varphi\) valid in \(\mathcal{M}\) if \(\varphi(a)\) is true in \(\mathcal{M}\) for all elements \(a\) of \(\mathcal{M}\).
In the following we will show the crucial observation that simple atomic formulas are either valid in \(\mathcal{M}\) or have only finitely many pairwise comparable solutions.
\begin{prop}
  \label{proposition:simple_atoms_are_valid_or_have_finitely_many_comparable_solutions}
  Let \(\varphi(x,\vec{y})\) be a simple atomic formula, \(\vec{b}\) a vector of elements of \(\mathcal{M}\), then either \(\varphi(x,\vec{b})\) is valid in \(\mathcal{M}\) or \(\varphi(x, \vec{b})\) has only finitely many pairwise comparable solutions.
\end{prop}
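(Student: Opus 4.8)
The plan is to normalise $\varphi$ by means of Lemma~\ref{lem:5}, reduce it to one of a few shapes, and then exploit the functionality axiom $\eqref{A:8}$ together with the ``shift'' behaviour of $+$ and of $\TriSymbol$ along the chains of $\mathcal{M}$. By Lemma~\ref{lem:5} every simple term is $\TDeltaI$-provably of the form $m_{1}z_{1} + \dots + m_{r}z_{r} + \numeral{k}$; since $\mathcal{M} \models \TDeltaI$ we may therefore, after substituting $\vec{b}$ for $\vec{y}$, assume that for every $a \in \mathcal{M}$ the formula $\varphi(a,\vec{b})$ has the same truth value as $m \cdot a \PlusInter c = m' \cdot a \PlusInter c'$ or as $\TriInter{m \cdot a \PlusInter c}{\,m' \cdot a \PlusInter c'}$, where $m,m' \in \mathbb{N}$ and $c,c'$ are fixed elements of $\mathcal{M}$ (with the convention $0 \cdot a \coloneqq \ZeroInter$). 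I would then record three facts. (i) If $b = \SuccInter^{k}a$ with $k \geq 1$, then $m \cdot b = m \cdot a \PlusInter \numeral{mk}$, and by $\eqref{B:4}$ one has $v \PlusInter \numeral{j} \neq v$ for every $v$ and every $j \geq 1$. (ii) By $\eqref{A:8}$ the relation $\TriInterSymbol$ is the graph of a partial function $\tau$, and by $\eqref{A:7:1}$ and $\eqref{A:8}$, whenever $\tau(a)$ is defined so is $\tau(\SuccInter^{j}a)$, with $\tau(\SuccInter^{j}a) = j \cdot a \PlusInter \numeral{\triangle_{j}} \PlusInter \tau(a)$; together with $\eqref{A:1}$ and $\eqref{B:4}$ this makes $\tau$ injective on every $\sim$-class on which it is defined. (iii) By Lemma~\ref{lem:1}(3) a sum is standard iff both summands are; in particular, for $j \geq 1$, $j \cdot a$ is standard iff $a$ is.

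For an equational atom: if $m = m'$ then $\eqref{B:4}$ makes it equivalent to $c = c'$, hence valid in $\mathcal{M}$ if $c = c'$ and unsatisfiable otherwise; if $m \neq m'$, say $m < m'$, then it is equivalent to $(m'-m) \cdot a \PlusInter c' = c$, and by (i) two comparable solutions $a \prec b$ would force $c = c \PlusInter \numeral{(m'-m)k} \neq c$, so there is at most one solution per $\sim$-class. For a triangle atom I would split on whether $m,m'$ vanish. If $m = m' = 0$ the atom $\TriInter{c}{c'}$ is closed, hence valid or unsatisfiable. If $m = 0 < m'$ the atom is $\TriInter{c}{m' \cdot a \PlusInter c'}$; it has no solution unless $\tau(c)$ is defined, and then its solutions are exactly the $a$ with $m' \cdot a \PlusInter c' = \tau(c)$, of which by the equational analysis there is at most one per $\sim$-class. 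If $m > 0 = m'$ the atom is $\TriInter{m \cdot a \PlusInter c}{c'}$; by (i) the map $a \mapsto m \cdot a \PlusInter c$ is injective on each $\sim$-class and sends it into a single $\sim$-class, on which $\tau$ is injective by (ii), so again there is at most one solution per $\sim$-class. In every case treated so far, $\varphi(x,\vec{b})$ is valid in $\mathcal{M}$ or has at most one comparable solution.

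The remaining subcase $m,m' \geq 1$ is the crux. Suppose for contradiction that some $\sim$-class contains infinitely many solutions. A $\sim$-class is order-isomorphic to $(\mathbb{N},\leq)$ or to $(\mathbb{Z},\leq)$, so it contains an infinite strictly monotone sequence of solutions; we treat the increasing case $a^{(1)} \prec a^{(2)} \prec \cdots$, the decreasing case being entirely analogous. Let $k_{i} \geq 1$ be the gap between $a^{(i)}$ and $a^{(i+1)}$, and put $u_{i} \coloneqq m \cdot a^{(i)} \PlusInter c$ and $v_{i} \coloneqq m' \cdot a^{(i)} \PlusInter c'$, so that $\tau(u_{i}) = v_{i}$. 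By (i) we have $u_{i+1} = \SuccInter^{mk_{i}} u_{i}$ and $v_{i+1} = v_{i} \PlusInter \numeral{m'k_{i}}$; substituting these into the identity for $\tau$ in (ii) and cancelling $v_{i}$ via $\eqref{B:4}$ yields $(mk_{i}) \cdot u_{i} \PlusInter \numeral{\triangle_{mk_{i}}} = \numeral{m'k_{i}}$. Since the right-hand side is standard and $mk_{i} \geq 1$, fact (iii) forces $u_{i}$ to be standard, say $u_{i} = \numeral{n_{i}}$; since numerals are pairwise distinct in $\mathcal{M}$ (by $\eqref{A:1}$ and $\eqref{A:3}$) we get $mk_{i}n_{i} + \triangle_{mk_{i}} = m'k_{i}$ in $\mathbb{N}$. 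The resulting inequality $\triangle_{mk_{i}} \leq m'k_{i}$ bounds $k_{i}$ by $(2m'-m)/m^{2}$, so the $k_{i}$ take only finitely many values, and $n_{i} = (m'k_{i} - \triangle_{mk_{i}})/(mk_{i})$ is then determined by $k_{i}$, so the $n_{i}$ take only finitely many values as well; but $u_{i} = \numeral{n_{i}}$ is strictly increasing along the class (since $a^{(i)}$ is and $m \geq 1$), so the $n_{i}$ are pairwise distinct --- a contradiction. Hence every $\sim$-class contains only finitely many solutions, which is what ``finitely many pairwise comparable solutions'' means. The main obstacle is exactly this last subcase: it needs the closed form for $\tau$ along a chain together with the observation that the resulting arithmetical constraint simultaneously bounds the gaps $k_{i}$ and determines $n_{i}$ from $k_{i}$, which is incompatible with the forced strict growth of the $u_{i}$.
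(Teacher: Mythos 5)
Your proof is correct, but in the decisive subcase it takes a genuinely different route from the paper's. For the equational atom the paper simply cites Lemma~1 of Shoenfield's 1958 paper, whereas you reprove that case from scratch; for the triangle atom $\TriInter{mx + c}{nx + d}$ the paper splits only on $m = 0$ versus $m > 0$, so your first two subcases (both coefficients zero, or only the left one zero) coincide with its $m = 0$ branch, and its $m > 0$ branch subsumes your last two. The real divergence is there. The paper fixes a single solution $e$ and supposes two further pairwise comparable solutions $\SuccInter^{p_1}e$ and $\SuccInter^{p_2}e$ with $0 < p_1 < p_2$; using the same closed form for $\tau$ along a chain that you record in point~(ii), together with \eqref{A:8}, it forces $me \PlusInter c$ to be standard, say $\SuccInter^{k}\ZeroInter$, and derives that each $p_i$ satisfies the single constraint $m^{2}p_i + m + 2mk = 2n$, which determines $p_i$ uniquely --- so there are at most two pairwise comparable solutions, in one blow for every $n \geq 0$. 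You instead extract an infinite monotone sequence $a^{(1)} \prec a^{(2)} \prec \cdots$ of solutions, compare consecutive terms with gap $k_i$, derive $(mk_i)u_i \PlusInter \numeral{\triangle_{mk_i}} = \numeral{nk_i}$, force each $u_i = ma^{(i)} \PlusInter c$ to be standard, bound the gaps via $\triangle_{mk_i} \leq nk_i$, and note that the standard indices of the $u_i$ are then determined by the bounded $k_i$ yet must be pairwise distinct. Both arguments hinge on the same two facts --- functionality \eqref{A:8} forces a standard element, and the quadratic growth of $\triangle$ is incompatible with the linear right-hand side --- but the paper's choice of a single base $e$ with two offsets is shorter and yields the sharper bound of at most two comparable solutions, whereas your consecutive-differences scheme is longer, needs the extra case split on $n$, and yields only ``finitely many,'' which is nevertheless all the statement requires.
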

\begin{proof}
  Depending on the form of \(\varphi\) we need to distinguish between two cases.
    If \(\varphi\) is of the form \(s = t\), then clearly \(\varphi(x, \vec{b})\) is equivalent in \(\mathcal{M}\) to \(mx + c = nx + d\) for some \(c, d \in \mathcal{M}\).
  The claim then follows from Lemma 1 in \cite{shoenfield1958}.
  If \(\varphi\) is of the form \(\Tri{s}{t}\), then \(\varphi(x, \vec{b})\) is equivalent in \(\mathcal{M}\) to \(\TriInter{mx + c}{nx + d}\) for some \(n, m \in \mathbb{N}\) and \(c,d \in \mathcal{M}\).
  We need to consider two cases:
  \begin{itemize}
  \item For \(m = 0\), we assume that there are at least two comparable solutions \(e\) and \(\SuccInter^{p}e\) of \(\varphi(x, \vec{b})\) with \(p > 0\).
    We have \(\TriInter{c}{ne + d}\) and \(\TriInter{c}{n \mathbf{S}^pe + d}\).
    Therefore by \eqref{A:8} we have \(n e + d = n \mathbf{S}^{p}e + d\).
    By \eqref{B:2} and \eqref{A:5} we obtain \(n e + d = \mathbf{S}^{np}\mathbf{0} \PlusInter n e + d\).
    By \eqref{B:4} we then have \(\mathbf{0} = \mathbf{S}^{np}\mathbf{0}\).
    Hence we clearly have \(n = 0\).
    Thus \(n e + d = d\), and therefore \(\TriInter{c}{d}\) is true in \(\mathcal{M}\).
    Because of that \(\TriInter{c}{nx + d}\) is valid in \(\mathcal{M}\).
  \item For \(m > 0\), we will show that there are at most two comparable solutions of \(\varphi(x, \vec{b})\).
    We proceed indirectly and assume that there are at least three pairwise comparable solutions \(e\), \(\SuccInter^{p_{1}}e\), and \(\SuccInter^{p_{2}}e\) of \(\varphi(x, \vec{b})\) with \(0 < p_{1} < p_{2}\).
  Since \(e\) is a solution we have \(\TriInter{me \PlusInter c}{ne \PlusInter d}\).
  Let \(i \in \{ 1, 2\}\), then iterating \eqref{A:7:1} and straightforward rewriting we have
  \[
    \TriInter{\SuccInter^{p_{i}m}(me + c)}{\SuccInter^{\triangle_{p_{i}m}}\ZeroInter + p_{i}m(me + c) + ne + d}.
  \]
  Since \(\SuccInter^{p_{i}}e\) is a solution of \(\varphi(x, \vec{b})\) we have \(\TriInter{m(\SuccInter^{p_{i}}e) + c}{n(\SuccInter^{p_{i}}e) + d}\).
  Therefore by \eqref{A:5}, \eqref{A:8}, \eqref{B:2} and \eqref{B:4} we obtain
  \[
    \SuccInter^{\triangle_{p_{i}m}}\ZeroInter + p_{i}m(me + c) = \SuccInter^{np_{i}}\ZeroInter.
  \]
  Thus the element \(me + c\) is a standard element of \(\mathcal{M}\).
  Therefore there exists \(k \in \mathbb{N}\) such that \(me + c = \SuccInter^{k}\ZeroInter\).
  We thus have \(\SuccInter^{\triangle_{p_{i}m}}\ZeroInter + p_{i}m\SuccInter^{k}\ZeroInter = \SuccInter^{np_{i}}\ZeroInter\).
  Hence by \eqref{A:1} and because \(\SuccInter\) is injective we obtain
  \[
    \triangle_{p_{i}m} + p_{i}mk = np_{i}.
  \]
  Hence \(m^{2}p_{1} + m + 2mk = 2n = m^{2}p_{2} + m + 2mk\). But since \(m \neq 0\), this contradicts the assumption that \(p_{1} < p_{2}\). \qedhere
  \end{itemize}
\end{proof}
We are now ready to show that \(\mathcal{M}\) is a model of open induction over the language \(L_{\TriSymbol}\).
The proof is analogous to the proof given in \cite{shoenfield1958}.
For the sake of completeness we outline the main steps of the proof.
\begin{thm}
  \label{theorem:extended_shoenfield_theorem}
  Let \(\mathcal{M}\) be a model of \(\TDeltaI\), then \(\mathcal{M}\) is a model of \(\theoryIndOpen{L_{\TriSymbol}}\).
\end{thm}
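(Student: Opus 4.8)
The plan is to follow Shoenfield's strategy, with Proposition~\ref{proposition:simple_atoms_are_valid_or_have_finitely_many_comparable_solutions} as the essential ingredient. Since $\mathcal{M} \models \TDeltaI$, Proposition~\ref{proposition:formula_equivalent_to_simple_formula} lets us replace any quantifier-free $L_{\TriSymbol}$ formula by a $\TDeltaI$-equivalent simple one without introducing quantifiers, and in a model of $\TDeltaI$ the instance $\formulaInd{x}{\varphi}$ holds as soon as $\formulaInd{x}{\psi}$ does for a $\TDeltaI$-equivalent $\psi$. Hence it suffices to show $\mathcal{M} \models \formulaInd{x}{\psi}$ for every simple, quantifier-free $\psi(x,\vec{z})$. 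Fixing such a $\psi$, fixing a vector $\vec{b}$ of elements of $\mathcal{M}$ for $\vec{z}$, and assuming that $\psi(\Zero,\vec{b})$ and $\forall x\,(\psi(x,\vec{b}) \to \psi(\Succ x,\vec{b}))$ hold in $\mathcal{M}$, the goal becomes $\mathcal{M} \models \forall x\,\psi(x,\vec{b})$.

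First I would observe that $\psi$ holds at every standard element: by (external) induction on $m \in \mathbb{N}$, the hypothesis $\psi(\Zero,\vec{b})$ together with the successor hypothesis instantiated at $\SuccInter^{m}\ZeroInter$ gives $\psi(\SuccInter^{m}\ZeroInter,\vec{b})$, and every standard element has this form. Suppose now, for contradiction, that $\neg\psi(a,\vec{b})$ for some $a$; by the previous sentence $a$ is non-standard. Taking the contrapositive of the successor hypothesis, $\neg\psi$ propagates downward; since $a$ is non-standard, Lemma~\ref{lem:1} gives $a = \SuccInter\PredInter a$, and, the elements $\PredInter^{m}a$ being non-standard as well by Lemma~\ref{lem:1}, iteration yields $\neg\psi(\PredInter^{m}a,\vec{b})$ for every $m \in \mathbb{N}$. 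By the description of the chain $[a]$, these elements are pairwise distinct and pairwise comparable.

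The core is a \emph{generic value} argument. Let $\alpha_{1},\dots,\alpha_{r}$ be the atoms of $\psi$; by Proposition~\ref{proposition:simple_atoms_are_valid_or_have_finitely_many_comparable_solutions} each $\alpha_{i}(x,\vec{b})$ is either valid in $\mathcal{M}$ or has only finitely many pairwise comparable solutions. Let $V$ be the set of the valid ones, $N$ the remaining ones, and let $\psi_{0} \in \{\top,\bot\}$ be the truth value obtained by evaluating the Boolean structure of $\psi$ with all atoms of $V$ set to true and all atoms of $N$ set to false. Then at any element $c$ at which all atoms of $N$ evaluate to false, $\psi(c,\vec{b})$ has truth value $\psi_{0}$ (the atoms of $V$ are true at $c$ since they are valid). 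Applying this on the standard chain: each atom of $N$ has only finitely many solutions among the pairwise comparable standard elements, so sufficiently far up in $[\ZeroInter]$ all atoms of $N$ are false, whence, since $\psi$ is true there, $\psi_{0} = \top$. Applying it on the chain $[a]$: each atom of $N$ has only finitely many solutions among the $\PredInter^{m}a$, so for all large $m$ every atom of $N$ is false at $\PredInter^{m}a$, whence, since $\psi(\PredInter^{m}a,\vec{b})$ is false, $\psi_{0} = \bot$ --- a contradiction. Therefore no such $a$ exists, so $\mathcal{M} \models \forall x\,\psi(x,\vec{b})$, completing the proof.

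The substantive part of the argument is short; I expect the main obstacle to be bookkeeping rather than mathematics: verifying that the passage to simple quantifier-free formulas is sound (the equivalence of Proposition~\ref{proposition:formula_equivalent_to_simple_formula} is only modulo $\TDeltaI$, but $\mathcal{M} \models \TDeltaI$, and it adds no quantifiers), that atoms of $\psi$ not mentioning $x$ are correctly placed into $V$ or $N$ according to their truth value under $\vec{b}$, and that the facts about non-standard elements invoked in the descent step --- that $\PredInter$ preserves non-standardness, that $a = \SuccInter^{m}\PredInter^{m}a$, and that the $\PredInter^{m}a$ are distinct --- are precisely those established in Lemma~\ref{lem:1} and the surrounding discussion of chains.
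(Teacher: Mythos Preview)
Your proposal is correct and follows essentially the same approach as the paper: reduce to simple formulas via Proposition~\ref{proposition:formula_equivalent_to_simple_formula}, handle standard elements directly, and for non-standard $a$ use Proposition~\ref{proposition:simple_atoms_are_valid_or_have_finitely_many_comparable_solutions} to find a point far enough down the chain $[a]$ and far enough up the standard chain where all non-valid atoms are false, forcing the truth values to agree. The only cosmetic difference is that you phrase the non-standard case as a proof by contradiction (propagating $\neg\psi$ downward via the contrapositive of the step hypothesis), whereas the paper argues directly that $\psi(\PredInter^{m}a,\vec b)$ is true and then applies the step hypothesis $m$ times forward; the underlying generic-value argument is identical.
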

\begin{proof}
  Let \(\theta(x, \vec{z})\) be a quantifier-free \(L_{\TriSymbol}\) formula.
  We have to show that \(\mathcal{M} \models \formulaInd{x}{\theta(x, \vec{z})}\).
  By Proposition \ref{proposition:formula_equivalent_to_simple_formula} we can assume without loss of generality that \(\theta\) is a simple formula.
  Let \(\vec{b}\) be a vector of elements of \(\mathcal{M}\) and let \(\theta'(x) \coloneqq \theta(x, \vec{b})\). Assume that \(\theta'(\mathbf{0})\) is true in \(\mathcal{M}\) and \(\theta'(x) \rightarrow \theta'(\mathbf{S}(x))\) is valid in \(\mathcal{M}\).
  Let \(a\) be an arbitrary element of \(\mathcal{M}\).
  If \(a\) is a standard element, then \(a = \SuccInter^{m}\ZeroInter\) for some natural number \(m\). So by applying the induction step repeatedly, starting with \(\theta'(\ZeroInter)\), we obtain \(\theta'(a)\).
  
  Now let us consider the case when \(a\) is non-standard.
  Consider the atoms from which \(\theta'(x)\) is built.
  By Proposition \ref{proposition:simple_atoms_are_valid_or_have_finitely_many_comparable_solutions} there are two types of atoms: those that are valid in \(\mathcal{M}\), and those that have at most finitely many (two) pairwise comparable solutions.
  Valid atoms are true in \(\mathcal{M}\) regardless of the choice of \(x\), so we need to consider only the remaining atoms.
  By successively letting \(x\) be \(a\), \(\PredInter^{1}a\), \(\PredInter^{2}a\), and so on, we will eventually exhaust all the solutions (\(\preceq a\)) of any invalid atom of \(\theta'(x)\).
  In other words, by choosing \(n \in \mathbb{N}\) large enough, the element \(\PredInter^{n}a\) falsifies all the invalid atoms.
  The same technique works for standard elements, starting at \(\ZeroInter\) and successively considering \(\SuccInter^{1}\ZeroInter\), \(\SuccInter^{2}\ZeroInter\), and so forth.
  Now by taking \(m \in \mathbb{N}\) large enough so that both \(\PredInter^{m}a\) and \(\SuccInter^{m}\ZeroInter\) falsify all the invalid atoms, we observe that \(\theta'(\SuccInter^{m}\ZeroInter)\) and \(\theta'(\PredInter^{m}a)\) have the same truth value in \(\mathcal{M}\). Therefore, since \(\theta'(\SuccInter^{m}\ZeroInter)\) is true in \(\mathcal{M}\), also \(\theta'(\PredInter^{m}a)\) is true.
  Since \(\SuccInter^{m}\PredInter^{m}a = a\), we can simply apply the induction step \(m\) times to find that \(\theta'(a)\) is true in \(\mathcal{M}\).
\end{proof}
The finite, universal axiomatizability of \(\TDelta + \theoryIndOpen{L_{\TriSymbol}}\) now follows immediately from Theorem \ref{theorem:extended_shoenfield_theorem}, completeness of first-order logic, and from Lemma \ref{lemma:B1-to-B4-provable-with-open-induction}.
\begin{thm}
\label{theorem:modified-shoenfield-theorem}
Let \(\varphi\) be a formula, then \(\TDelta + \theoryIndOpen{L_{\TriSymbol}} \vdash \varphi\) if and only if \(\TDeltaI \vdash \varphi\).
\end{thm}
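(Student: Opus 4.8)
The plan is to establish the two implications separately by a straightforward semantic argument, leaning on the completeness theorem for first-order logic together with the two substantial results already proved.

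For the direction from right to left, suppose $\TDeltaI \vdash \varphi$. Recall that $\TDeltaI$ is axiomatized by the axioms $\mathrm{A1}$--$\mathrm{A9}$ of $\TDelta$ together with $\mathrm{B1}$--$\mathrm{B4}$. The axioms $\mathrm{A1}$--$\mathrm{A9}$ are trivially provable in $\TDelta + \theoryIndOpen{L_{\TriSymbol}}$, as they already belong to $\TDelta$, and $\mathrm{B1}$--$\mathrm{B4}$ are provable in $\TDelta + \theoryIndOpen{L_{\TriSymbol}}$ by Lemma \ref{lemma:B1-to-B4-provable-with-open-induction}. Hence every axiom of $\TDeltaI$, and therefore every theorem of $\TDeltaI$, is a theorem of $\TDelta + \theoryIndOpen{L_{\TriSymbol}}$; in particular $\TDelta + \theoryIndOpen{L_{\TriSymbol}} \vdash \varphi$.

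For the direction from left to right, suppose $\TDelta + \theoryIndOpen{L_{\TriSymbol}} \vdash \varphi$. By the completeness theorem it suffices to show that $\varphi$ holds in every model of $\TDeltaI$. So let $\mathcal{M}$ be an arbitrary model of $\TDeltaI$. Since $\TDelta \subseteq \TDeltaI$, the structure $\mathcal{M}$ is a model of $\TDelta$, and by Theorem \ref{theorem:extended_shoenfield_theorem} it is also a model of $\theoryIndOpen{L_{\TriSymbol}}$. Consequently $\mathcal{M} \models \TDelta + \theoryIndOpen{L_{\TriSymbol}}$, and since $\TDelta + \theoryIndOpen{L_{\TriSymbol}} \vdash \varphi$ we obtain $\mathcal{M} \models \varphi$. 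As $\mathcal{M}$ was arbitrary, $\varphi$ is true in every model of $\TDeltaI$, whence $\TDeltaI \vdash \varphi$ by completeness.

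The two theories are thus deductively equivalent. All the genuine content has already been discharged in Lemma \ref{lemma:B1-to-B4-provable-with-open-induction} and Theorem \ref{theorem:extended_shoenfield_theorem}; this final step is purely a matter of combining them through completeness, so I expect no real obstacle here — the only point to be careful about is that Theorem \ref{theorem:extended_shoenfield_theorem} supplies exactly the inclusion of model classes needed, namely that every model of $\TDeltaI$ already satisfies open induction over $L_{\TriSymbol}$.
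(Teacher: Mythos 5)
Your proof is correct and is precisely the argument the paper has in mind: the paper states only that the theorem "follows immediately from Theorem \ref{theorem:extended_shoenfield_theorem}, completeness of first-order logic, and from Lemma \ref{lemma:B1-to-B4-provable-with-open-induction}", and your two-directional argument is exactly the routine unpacking of that remark.
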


\section{Clause Set Cycles and Open Induction}
\label{sec:org567ac08}

In Section \ref{sec:org353db4f} we have shown that refutability by a clause set cycle is contained in the theory of \(\E{1}\) induction.
The next obvious question to ask is whether refutability by a clause set cycle is also contained in the theory of open induction.
In this section we will provide a negative answer to that question by making use of the finite axiomatizability of the theory of triangular numbers with open induction shown in Section \ref{sec:org54c1dd2}.
In order to provide such a negative answer it suffices to provide a clause set which is refutable by a clause set cycle, but that is not refutable by open induction.
A candidate clause set is readily found.
\begin{defi}
  We denote by \(S_{\triangleright}(\Param)\) the clause set consisting of the clauses \eqref{A:4} -- \eqref{A:7:1} and the clause \(\forall y \, \neg \Tri{\Param}{y}\).
\end{defi}
Let us denote by \(L_{\triangleright}'\) the language of the clause set \(S_{\triangleright}\).
The clause set \(S_{\triangleright}\) expresses that the triangle function is not total.
\begin{lem}
  \label{lemma:S_triangle_refutable_with_clause_set_cycles}
The clause set \(S_{\triangleright}(\Param)\) is refutable by a clause set cycle.
\end{lem}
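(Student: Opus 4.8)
The plan is to show that $S_{\triangleright}(\Param)$ is itself a clause set cycle; the lemma then follows immediately, since every clause set cycle is refuted by itself. Indeed, once $S_{\triangleright}(\Param)$ is known to be a clause set cycle, taking $n = 0$ in the definition of refutation by a clause set cycle reduces condition \eqref{cond:refcsc:1} to the trivial entailment $S_{\triangleright}(\Param) \models S_{\triangleright}(\Param)$ and leaves condition \eqref{cond:refcsc:2} vacuous. So it remains to verify the two conditions \eqref{cond:csc:1} and \eqref{cond:csc:2} for $S_{\triangleright}(\Param)$.

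Condition \eqref{cond:csc:2} is immediate: the clause set $S_{\triangleright}(\Zero)$ contains the axiom \eqref{A:6}, that is $\Tri{\Zero}{\Zero}$, as well as the clause $\forall y \, \neg\Tri{\Zero}{y}$, whose instance at $y := \Zero$ is $\neg\Tri{\Zero}{\Zero}$; hence $S_{\triangleright}(\Zero) \models \bot$.

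For condition \eqref{cond:csc:1} I would first note that the clauses \eqref{A:4}--\eqref{A:7:1} contain no occurrence of the parameter $\Param$, so they belong to both $S_{\triangleright}(\Succ\Param)$ and $S_{\triangleright}(\Param)$; it therefore suffices to derive the remaining clause of $S_{\triangleright}(\Param)$, i.e.\ to show $S_{\triangleright}(\Succ\Param) \models \forall y \, \neg\Tri{\Param}{y}$. Equivalently, $S_{\triangleright}(\Succ\Param)$ together with $\Tri{\Param}{y}$ is unsatisfiable: instantiating \eqref{A:7:1} at $x := \Param$ gives $\Tri{\Succ\Param}{\Succ\Param + y}$, whereas instantiating the clause $\forall z \, \neg\Tri{\Succ\Param}{z}$ of $S_{\triangleright}(\Succ\Param)$ at $z := \Succ\Param + y$ gives $\neg\Tri{\Succ\Param}{\Succ\Param + y}$ --- a contradiction. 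Hence $\neg\Tri{\Param}{y}$ holds for the free variable $y$, so $\forall y \, \neg\Tri{\Param}{y}$ follows, and $S_{\triangleright}(\Param)$ is a clause set cycle.

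I do not expect any genuine obstacle: the verification of \eqref{cond:csc:1} is simply the contrapositive of the routine induction establishing totality of the triangle function --- base case \eqref{A:6}, step case \eqref{A:7:1} --- packaged as an argument by infinite descent, which is exactly the shape of a clause set cycle. (The arithmetical axioms \eqref{A:4} and \eqref{A:5} play no role in the cycle property; they are included in $S_{\triangleright}$ only so that the clause set faithfully expresses non-totality of the triangle function, which matters for the non-refutability result that follows.)
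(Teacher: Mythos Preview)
Your proof is correct and follows essentially the same approach as the paper's: you verify that $S_{\triangleright}(\Param)$ is itself a clause set cycle by checking \eqref{cond:csc:2} via \eqref{A:6} against the instance $\neg\Tri{\Zero}{\Zero}$, and \eqref{cond:csc:1} via the contrapositive of \eqref{A:7:1}, then observe that a clause set cycle trivially refutes itself with $n=0$. The only cosmetic difference is that the paper phrases the argument with $\vdash$ and appeals to soundness, whereas you work directly with $\models$ as in the definition.
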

\begin{proof}
  By the soundness of first-order logic it suffices to show that
  \begin{gather}
    S_{\triangleright}(\Zero) \vdash \bot, \label{eq:1} \\
    \intertext{and}
    S_{\triangleright}(\Succ(\Param)) \vdash S_{\triangleright}(\Param). \label{eq:2}
  \end{gather}
  For \eqref{eq:1} we have \(S_{\triangleright}(\Zero) \vdash \Tri{\Zero}{\Zero}\) and \(S_{\triangleright}(\Zero) \vdash \forall y \, \neg \Tri{\Zero}{y}\).
  Hence \(S_{\triangleright}(\Zero) \vdash \bot\).
  For \eqref{eq:2} we assume \(S_{\triangleright}(\Succ(\Param))\).
  The clauses of \(S_{\triangleright}\) not having free variables occur in \(S_{\triangleright}(\Succ(\eta))\), hence we only need to show that \(S_{\triangleright}(\Succ(\eta)) \vdash \forall y \, \neg \Tri{\eta}{y}\).
  Let \(y\) be arbitrary, then obtain \(S_{\triangleright}(\Succ(\eta)) \vdash \neg \Tri{\Succ(\Param)}{\Succ(\Param) + y}\).
  By the contrapositive of \eqref{A:7:1} we have \(\neg \Tri{\Param}{y}\).
  Therefore the clause set \(S_{\triangleright}\) is a clause set cycle.
  Since a clause set cycle is trivially refuted by itself, we are done.
\end{proof}
It now remains to show that \(S_{\triangleright}\) cannot be refuted by open induction.
In order to be able to make use of Theorem \ref{theorem:modified-shoenfield-theorem}, we will now reformulate the clause set \(S_{\triangleright}\) in terms of a theory of triangular numbers.
In the following we denote by \(\TDeltaSimple\) the theory axiomatized by the formulas \eqref{A:4} -- \eqref{A:7:1}.
\begin{lem}
  \label{lemma:clause_set_cycle_reduces_to_theory}
  \(\theoryIndOpen{L_{\triangleright}'} \vdash \neg S_{\TriSymbol}(\Param)\) if and only if \(\TDeltaSimple + \theoryIndOpen{L_{\triangleright}'} \vdash \forall x \, \exists y \, \Tri{x}{y}\).
\end{lem}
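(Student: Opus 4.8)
The plan is to unfold the definition of \(S_{\triangleright}(\Param)\), observe that its variable-free clauses are exactly the axioms of \(\TDeltaSimple\), and then move these across the turnstile using the deduction theorem, treating the parameter \(\Param\) carefully as a free variable.

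First I would record the basic reformulation. By definition \(S_{\triangleright}(\Param)\) is the conjunction of the clauses \eqref{A:4}--\eqref{A:7:1} with the clause \(\forall y \, \neg \Tri{\Param}{y}\); writing \(\Theta\) for the conjunction of \eqref{A:4}--\eqref{A:7:1} (i.e.\ the finite axiomatization of \(\TDeltaSimple\)), we have that \(S_{\triangleright}(\Param)\) is \(\Theta \wedge \forall y \, \neg \Tri{\Param}{y}\). Hence, by elementary classical reasoning, \(\neg S_{\triangleright}(\Param)\) is logically equivalent to \(\Theta \rightarrow \exists y \, \Tri{\Param}{y}\). All of the remaining work is to relate \(\theoryIndOpen{L_{\triangleright}'} \vdash \Theta \rightarrow \exists y \, \Tri{\Param}{y}\) with \(\TDeltaSimple + \theoryIndOpen{L_{\triangleright}'} \vdash \forall x \, \exists y \, \Tri{x}{y}\).

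For the left-to-right direction I would assume \(\theoryIndOpen{L_{\triangleright}'} \vdash \neg S_{\triangleright}(\Param)\), i.e.\ \(\theoryIndOpen{L_{\triangleright}'} \vdash \Theta \rightarrow \exists y \, \Tri{\Param}{y}\). Since \(\TDeltaSimple \vdash \Theta\), modus ponens gives \(\TDeltaSimple + \theoryIndOpen{L_{\triangleright}'} \vdash \exists y \, \Tri{\Param}{y}\). Now \(\Param\) does not occur free in any axiom of \(\TDeltaSimple + \theoryIndOpen{L_{\triangleright}'}\): the axioms of \(\TDeltaSimple\) are sentences, and each open-induction axiom \(\formulaInd{x}{\psi}\) universally closes all free variables of \(\psi\) other than \(x\). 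Therefore generalization on \(\Param\) is sound and yields \(\TDeltaSimple + \theoryIndOpen{L_{\triangleright}'} \vdash \forall x \, \exists y \, \Tri{x}{y}\). Conversely, assuming \(\TDeltaSimple + \theoryIndOpen{L_{\triangleright}'} \vdash \forall x \, \exists y \, \Tri{x}{y}\), instantiating \(x\) by \(\Param\) gives \(\TDeltaSimple + \theoryIndOpen{L_{\triangleright}'} \vdash \exists y \, \Tri{\Param}{y}\), and since \(\TDeltaSimple\) is axiomatized by \(\Theta\), the deduction theorem gives \(\theoryIndOpen{L_{\triangleright}'} \vdash \Theta \rightarrow \exists y \, \Tri{\Param}{y}\), which by the first paragraph is precisely \(\theoryIndOpen{L_{\triangleright}'} \vdash \neg S_{\triangleright}(\Param)\).

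I do not expect a genuine obstacle here: the argument is essentially bookkeeping about the syntactic shape of \(S_{\triangleright}\) together with the deduction theorem and generalization. The only point requiring a moment of care is the status of \(\Param\): generalizing over it when passing to \(\forall x \, \exists y \, \Tri{x}{y}\) is legitimate exactly because the ambient axioms (those of \(\TDeltaSimple\) and the open-induction scheme) are sentences in which \(\Param\) does not occur free.
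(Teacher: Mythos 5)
Your proof is correct and follows essentially the same route as the paper: unfold the definition of \(S_{\triangleright}(\Param)\), rewrite \(\neg S_{\triangleright}(\Param)\) propositionally as an implication, pass between \(\Param\) and \(\forall x\) by generalization/instantiation (justified because \(\Param\) is not free in the ambient axioms), and move the \(\TDeltaSimple\) axioms across the turnstile via the deduction theorem. The paper packages this as a chain of equivalences; you spell out the two directions separately, but the underlying steps are identical.
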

\begin{proof}
  We have the following chain of equivalences.
  \begin{align*}
    \label{eq:2}
    & \theoryIndOpen{L_{\triangleright}'} \vdash \neg S_{\TriSymbol}(\Param) \\
    & \Leftrightarrow \theoryIndOpen{L_{\triangleright}'} \vdash \neg (\eqref{A:4} \wedge \dots \wedge \eqref{A:7:1} \wedge \forall y \, \neg \Tri{\eta}{y}) \\
    & \Leftrightarrow \theoryIndOpen{L_{\triangleright}'} \vdash \neg (\eqref{A:4} \wedge \dots \wedge \eqref{A:7:1} \wedge \neg \exists y \, \Tri{\eta}{y}) \\
    & \Leftrightarrow \theoryIndOpen{L_{\triangleright}'} \vdash \eqref{A:4} \wedge \dots \wedge \eqref{A:7:1} \rightarrow \exists y \, \Tri{\eta}{y} \\
    & \Leftrightarrow \theoryIndOpen{L_{\triangleright}'} \vdash \eqref{A:4} \wedge \dots \wedge \eqref{A:7:1} \rightarrow \forall x \, \exists y \, \Tri{x}{y}
  \end{align*}
  By the deduction theorem we thus have \(\theoryIndOpen{L_{\triangleright}'} \vdash \neg S_{\TriSymbol}(\Param)\) if and only if \(\TDeltaSimple + \theoryIndOpen{L_{\triangleright}'} \vdash \forall x \, \exists y \, \Tri{x}{y}\).
\end{proof}
In order to complete the negative answer it clearly suffices to show that \(\TDelta + \theoryIndOpen{L_{\triangleright}} \not \vdash \forall x \, \exists y \, \Tri{x}{y}\).
\begin{prop}
\label{proposition:triangular-numbers-not-total-with-quantifier-free-induction}
\(\TDelta + \theoryIndOpen{L_{\triangleright}} \not \vdash \forall x \, \exists y \, \Tri{x}{y}\).
\end{prop}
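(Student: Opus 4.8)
The plan is to prove the statement by exhibiting a countermodel. By Theorem~\ref{theorem:modified-shoenfield-theorem} it suffices to construct an \(L_{\TriSymbol}\)-structure \(\mathcal{M}\) with \(\mathcal{M}\models\TDeltaI\) and \(\mathcal{M}\not\models\forall x\,\exists y\,\Tri{x}{y}\): by that theorem (equivalently, directly by Theorem~\ref{theorem:extended_shoenfield_theorem}) such an \(\mathcal{M}\) is a model of \(\TDelta+\theoryIndOpen{L_{\TriSymbol}}\), hence also of \(\TDelta+\theoryIndOpen{L_{\triangleright}}\), and it refutes the sentence in question. So the whole task reduces to building one model of \(\TDeltaI\) in which the triangle predicate is the graph of a non-total partial function.

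For the additive part I would take \(\mathcal{M}\) to have as domain the submonoid \(M\) of \((\mathbb{Z}^{2},+)\) consisting of all pairs \((n,k)\) with \(n\geq 1\) together with all pairs \((0,k)\) with \(k\geq 0\); addition is componentwise, \(\ZeroInter\coloneqq(0,0)\), \(\SuccInter(n,k)\coloneqq(n,k+1)\), \(\PredInter(0,0)\coloneqq(0,0)\) and \(\PredInter(n,k)\coloneqq(n,k-1)\) otherwise. Realising \(M\) as a submonoid of \(\mathbb{Z}^{2}\) makes \eqref{A:4}, \eqref{A:5}, \eqref{B:2}, \eqref{B:3} and \eqref{B:4} immediate, while the constraint ``\(n\geq 1\) or \(k\geq 0\)'' ensures \((0,-1)\notin M\), so \(\ZeroInter\) is not a successor and \eqref{A:1} holds; \eqref{A:2}, \eqref{A:3} and \eqref{B:1} are then routine. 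In \(\mathcal{M}\) the standard elements are exactly the pairs \((0,k)\) with \(k\in\mathbb{N}\) (an \(\mathbb{N}\)-chain), and for each fixed \(n\geq 1\) the pairs \((n,k)\), \(k\in\mathbb{Z}\), form a non-standard \(\mathbb{Z}\)-chain, matching the picture developed in Section~\ref{sec:org54c1dd2}; in particular \((1,0)\) is a non-standard element.

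The crucial choice is to interpret the triangle predicate only on the standard part: \(\TriInterSymbol\coloneqq\{\,((0,k),(0,\triangle_{k}))\mid k\in\mathbb{N}\,\}\). Then \eqref{A:6} and the functionality axiom \eqref{A:8} are obvious, and \(\forall x\,\exists y\,\Tri{x}{y}\) fails since \((1,0)\) has no \(\TriInterSymbol\)-image. The only point that is not a one-line check — and the main obstacle — is verifying the two ``recursion'' axioms \eqref{A:7:1} and \eqref{A:7:2}: although these are universally quantified, I would argue that their hypotheses \(\Tri{x}{y}\) and \(\Tri{\Succ x}{\Succ x+y}\) can be satisfied only when the pertinent first argument is standard, because \(\TriInterSymbol\) relates no non-standard element to anything; for standard arguments the verification then collapses to the numerical identity \(\triangle_{k+1}=(k+1)+\triangle_{k}\) together with cancellativity (for \eqref{A:7:2} one recovers \(y=\triangle_{k}\) from \((k+1)+y=\triangle_{k+1}=(k+1)+\triangle_{k}\)). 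Hence no instance of \eqref{A:7:1} or \eqref{A:7:2} forces \(\TriInterSymbol\) to be defined at a non-standard element, so \(\mathcal{M}\models\TDeltaI\). Invoking Theorem~\ref{theorem:modified-shoenfield-theorem} then yields \(\TDelta+\theoryIndOpen{L_{\triangleright}}\not\vdash\forall x\,\exists y\,\Tri{x}{y}\), as required.
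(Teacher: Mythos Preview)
Your argument is correct. The model \(M\) is indeed a commutative cancellative monoid satisfying \eqref{A:1}--\eqref{A:5} and \eqref{B:1}--\eqref{B:4}; restricting \(\TriInterSymbol\) to the standard chain makes \eqref{A:6}, \eqref{A:7:1}, \eqref{A:7:2}, \eqref{A:8} hold for the reasons you give (in particular, for \eqref{A:7:2} the hypothesis forces \(\Succ x\) and \(\Succ x + y\) to be standard, whence \(x\) and \(y\) are standard and the numerical identity \(\triangle_{k+1}=(k+1)+\triangle_{k}\) together with cancellativity finishes it). So \(\mathcal{M}\models\TDeltaI\), and invoking Theorem~\ref{theorem:extended_shoenfield_theorem} yields the claim.

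This is, however, a genuinely different route from the paper's. Both arguments start from the finite universal axiomatization \(\TDeltaI\) of \(\TDelta+\theoryIndOpen{L_{\triangleright}}\), but then diverge: you build an explicit non-standard countermodel, while the paper argues proof-theoretically. It assumes provability, applies Herbrand's theorem (available precisely because \(\TDeltaI\) is universal) to extract finitely many witness terms \(t_{1}(x),\dots,t_{k}(x)\), and then observes in the \emph{standard} model \(\mathbb{N}^{\TriSymbol}\) that each \(t_{i}\) is linear in \(x\) whereas \(\triangle_{x}\) is quadratic, so the Herbrand disjunction must fail for large \(x\). Your approach is more self-contained (no Herbrand's theorem, no growth-rate reasoning) and produces a concrete model that could be reused; the paper's approach is shorter once Herbrand is available and makes transparent \emph{why} open induction is too weak here---the provably total witnesses are linear but the triangle function is not. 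The paper also sketches a third alternative via interpretation in \(I\Delta_{0}\) and Parikh's theorem, which is closer in spirit to its own Herbrand argument than to yours.
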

\begin{proof}
  We proceed indirectly and assume that \(\TDelta + \theoryIndOpen{L_{\triangleright}} \vdash \forall x \, \exists y \, \Tri{x}{y}\).
  By Theorem \ref{theorem:modified-shoenfield-theorem} we then also have \(\TDeltaI \vdash \forall x \, \exists y \, \Tri{x}{y}\).
  Since \(\TDeltaI\) is a universal theory we can apply Herbrand's theorem to obtain terms \(t_{1}(x), \dots, t_{k}(x)\) such that
  \[
    \TDeltaI \vdash \bigvee_{i = 1}^{k}\Tri{x}{t_{i}(x)}.
  \]
  Clearly \(\mathbb{N}^{\triangleright}\) is a model of \(\TDeltaI\) and the triangle function \(\TriSymbol^{\mathbb{N}^{\triangleright}}\) is quadratic.
  Since the terms \(t_{i}(x)\), with \(i = 1, \dots, k\) describe linear functions in \(\mathbb{N}^{\triangleright}\), there exists \(m \in \mathbb{N}\) such that
  \[
    \mathbb{N}^{\triangleright} \not \models \bigvee_{i = 1}^{k}\Tri{\numeral{m}}{t_{i}(\numeral{m})}.
  \]
  Contradiction!
\end{proof}
We would like to point out that it seems possible to obtain the result of Proposition \ref{proposition:triangular-numbers-not-total-with-quantifier-free-induction} by an alternative argument that relies on an interpretation of the theory \(\TDelta + \theoryIndOpen{L_{\triangleright}}\) in the subtheory \(I\Delta_{0}\) of Peano arithmetic.
The idea is to interpret individuals of \(\TDelta + \theoryIndOpen{L_{\triangleright}}\) as ``unary'' numbers, that is, finite sequences of ``1''s.
In particular \(\Zero\) would be interpreted as the empty sequence, \(\Succ\) as the function that appends a ``1'' to a finite sequence, \(+\) as the function that concatenates two finite sequences, and \(\TriSymbol\) as a \(\Delta_{0}\) formula that defines the graph of the ``unary'' triangle function.
Since \(I\Delta_{0}\) codes finite sequences the translation outlined above is indeed an interpretation.
By Parikh's theorem the provably total functions of \(I\Delta_{0}\) on ``unary'' elements have a linear growth rate, whereas the ``unary'' triangle function has a quadratic growth rate.
Hence, the translation of \(\forall x \, \exists y \, \Tri{x}{y}\) is not provable in \(I\Delta_{0}\) and therefore the formula \(\forall x \, \exists y \, \Tri{x}{y}\) is not provable in \(\TDelta + \theoryIndOpen{L_{\triangleright}}\).
All the notions necessary to develop the argument outlined above can be found in standard textbooks on first-order arithmetic such as \cite{hajek2017}.
\begin{cor}
  \label{cor:1}
  The clause set \(S_{\TriSymbol}(\Param)\) is refutable by a clause set cycle and \(\theoryIndOpen{L_{\triangleright}'} \not \vdash \neg S_{\TriSymbol}(\Param)\).
\end{cor}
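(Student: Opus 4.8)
The plan is simply to assemble the three immediately preceding results. Lemma~\ref{lemma:S_triangle_refutable_with_clause_set_cycles} already gives the first conjunct, namely that $S_{\TriSymbol}(\Param)$ is refutable by a clause set cycle, so the only thing left to argue is the unprovability statement $\theoryIndOpen{L_{\triangleright}'} \not\vdash \neg S_{\TriSymbol}(\Param)$, and I would obtain this by contradiction.

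So I would assume towards a contradiction that $\theoryIndOpen{L_{\triangleright}'} \vdash \neg S_{\TriSymbol}(\Param)$. By Lemma~\ref{lemma:clause_set_cycle_reduces_to_theory} this is equivalent to $\TDeltaSimple + \theoryIndOpen{L_{\triangleright}'} \vdash \forall x \, \exists y \, \Tri{x}{y}$. The next step is to observe that such a derivation already takes place inside the larger theory $\TDelta + \theoryIndOpen{L_{\triangleright}}$: the axioms \eqref{A:4}--\eqref{A:7:1} of $\TDeltaSimple$ are among the axioms of $\TDelta$, and since $L_{\triangleright}' \subseteq L_{\triangleright}$ every quantifier-free $L_{\triangleright}'$ formula is in particular a quantifier-free $L_{\triangleright}$ formula, so that $\theoryIndOpen{L_{\triangleright}'} \subseteq \theoryIndOpen{L_{\triangleright}}$. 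Hence $\TDelta + \theoryIndOpen{L_{\triangleright}} \vdash \forall x \, \exists y \, \Tri{x}{y}$, which directly contradicts Proposition~\ref{proposition:triangular-numbers-not-total-with-quantifier-free-induction}. It follows that $\theoryIndOpen{L_{\triangleright}'} \not\vdash \neg S_{\TriSymbol}(\Param)$, and together with Lemma~\ref{lemma:S_triangle_refutable_with_clause_set_cycles} this is precisely the statement of the corollary.

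There is no genuine obstacle here: all the substance already resides in the cited results, in particular in the finite axiomatizability Theorem~\ref{theorem:modified-shoenfield-theorem} and in the Herbrand-theorem growth-rate argument behind Proposition~\ref{proposition:triangular-numbers-not-total-with-quantifier-free-induction}. The only point that deserves a moment's attention is the direction of the theory/language inclusions, i.e.\ making sure that passing from $\TDeltaSimple, L_{\triangleright}'$ to $\TDelta, L_{\triangleright}$ only weakens the hypothesis, so that the failure of provability transfers back down from the larger theory to the smaller one. This is immediate once the inclusions are made explicit, so the corollary is essentially just a bookkeeping step combining the previous three statements.
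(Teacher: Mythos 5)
Your proposal is correct and follows exactly the intended route: the corollary is obtained by combining Lemma~\ref{lemma:S_triangle_refutable_with_clause_set_cycles}, Lemma~\ref{lemma:clause_set_cycle_reduces_to_theory}, and Proposition~\ref{proposition:triangular-numbers-not-total-with-quantifier-free-induction}, with the weakening step from \(\TDeltaSimple + \theoryIndOpen{L_{\triangleright}'}\) to \(\TDelta + \theoryIndOpen{L_{\triangleright}}\) being the only point requiring care. The paper leaves this assembly implicit (stating it as a corollary without proof), and your handling of the language and theory inclusions is exactly the right bookkeeping.
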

To summarize we thus have shown the following theorem.
\begin{thm}
  \label{theorem:clause_set_cycles_not_contained_in_open_induction}
  There exists a language \(L\), and an \(L\) clause set \(S(x)\) refutable by clause set cycles such that \(\theoryIndOpen{L} \not \vdash \neg S(x)\).
\end{thm}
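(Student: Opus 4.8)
The plan is to instantiate the abstract existential claim with the concrete witnesses constructed in this section: take $L := L_{\triangleright}'$, the language of the clause set $S_{\triangleright}$, and $S(x) := S_{\triangleright}(x)$. Refutability by a clause set cycle is then immediate from Lemma \ref{lemma:S_triangle_refutable_with_clause_set_cycles}; indeed, $S_{\triangleright}$ is shown there to be itself a clause set cycle, and a clause set cycle refutes itself.

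The remaining task is to establish $\theoryIndOpen{L_{\triangleright}'} \not\vdash \neg S_{\triangleright}(x)$, which is precisely the second half of Corollary \ref{cor:1}. I would derive it by the following chain. Proposition \ref{proposition:triangular-numbers-not-total-with-quantifier-free-induction} gives $\TDelta + \theoryIndOpen{L_{\triangleright}} \not\vdash \forall x\, \exists y\, \Tri{x}{y}$. Since the axioms of $\TDeltaSimple$ (the formulas \eqref{A:4}--\eqref{A:7:1}) form a subset of the axioms of $\TDelta$, and since $L_{\triangleright}'$ is a sublanguage of $L_{\triangleright}$ --- the only symbol of $L_{\triangleright}$ not occurring in $S_{\triangleright}$ being $\Pred$ --- we have $\theoryIndOpen{L_{\triangleright}'} \subseteq \theoryIndOpen{L_{\triangleright}}$ and hence $\TDeltaSimple + \theoryIndOpen{L_{\triangleright}'} \subseteq \TDelta + \theoryIndOpen{L_{\triangleright}}$ as sets of axioms. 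By monotonicity of provability it follows that $\TDeltaSimple + \theoryIndOpen{L_{\triangleright}'} \not\vdash \forall x\, \exists y\, \Tri{x}{y}$. Finally, Lemma \ref{lemma:clause_set_cycle_reduces_to_theory} converts this into $\theoryIndOpen{L_{\triangleright}'} \not\vdash \neg S_{\triangleright}(x)$, completing the argument.

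There is essentially no obstacle remaining at this stage: the mathematical content has been discharged in the preceding results --- the Shoenfield-style model-theoretic argument of Theorem \ref{theorem:extended_shoenfield_theorem}, its consequence Theorem \ref{theorem:modified-shoenfield-theorem} on finite axiomatizability, and the Herbrand-plus-growth-rate argument of Proposition \ref{proposition:triangular-numbers-not-total-with-quantifier-free-induction}. The only point deserving a moment's care is verifying the sublanguage and subtheory inclusions $L_{\triangleright}' \subseteq L_{\triangleright}$ and $\TDeltaSimple \subseteq \TDelta$, so that non-provability over the larger language and theory transfers downward; this is exactly the role of the reformulation of $S_{\triangleright}$ as a statement about a theory of triangular numbers in Lemma \ref{lemma:clause_set_cycle_reduces_to_theory}.
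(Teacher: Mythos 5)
Your proof is correct and matches the paper's argument: the paper derives Theorem \ref{theorem:clause_set_cycles_not_contained_in_open_induction} directly from Corollary \ref{cor:1}, which combines Lemma \ref{lemma:S_triangle_refutable_with_clause_set_cycles}, Lemma \ref{lemma:clause_set_cycle_reduces_to_theory}, and Proposition \ref{proposition:triangular-numbers-not-total-with-quantifier-free-induction}, using exactly the witnesses \(L_{\triangleright}'\) and \(S_{\triangleright}(x)\). You merely spell out the monotonicity step (\(\TDeltaSimple + \theoryIndOpen{L_{\triangleright}'} \subseteq \TDelta + \theoryIndOpen{L_{\triangleright}}\)) that the paper compresses into the phrase ``it clearly suffices.''
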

Since refutability by a clause set cycle is not contained in open induction, the next obvious question to ask is whether every clause set that is refutable with open induction is also refutable by a clause set cycle.
We believe that this is not the case.
Intuitively, this can be explained by the following two points: first clause set cycles do not allow for any free variables and secondly clause set cycles only allow for existential quantification.
These two shortcomings of clause set cycles can be demonstrated by the following example.
We assume the usual right recursive definition of the addition from which we want to prove the sentence \(\varphi \equiv \forall x \,  x + (x + x) = (x + x) + x\).
To prove \(\varphi\) with open induction, we first prove by open induction on the variable \(y\) the inductive formula \(\psi(x) \equiv \forall y \, x + (x + y) = (x + x) + y\).
The ``lemma'' \(\psi\) can now be used to prove \(\varphi\) by instantiating the universally quantified variable \(y\) by \(x\).
In this example both ideas mentioned above came into play, that is, the ``lemma'' \(\psi\) contains a free variable and it contains a universal quantifier that is actually used to prove \(\varphi\).
\begin{conj}
  \label{conjecture:clause_set_cycles_incomparable_with_open_induction}
  There exists a language \(L\), and an \(L\) clause set \(S(x)\) such that \(\theoryIndOpen{L} \vdash \neg S(x)\) but \(S(x)\) is not refutable by a clause set cycle.
\end{conj}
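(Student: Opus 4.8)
The plan is to take for $S$ the clause set indicated in the discussion preceding the conjecture and to locate the obstruction proof-theoretically. Let $L$ consist of $\Zero/0$, $\Succ/1$, $+/2$, and let $S(\Param)$ be the clause set whose clauses are the recursion equations \eqref{A:4}, \eqref{A:5} for addition together with $\Param + (\Param + \Param) \neq (\Param + \Param) + \Param$. The first two clauses carry no parameter and hence act as a background theory. From \eqref{A:4} and \eqref{A:5} one proves $\numeral{k} + \numeral{m} = \numeral{k+m}$ for all $k, m \in \mathbb{N}$, so $S(\numeral{k}) \models \bot$ for every $k$; thus the ground conditions \eqref{cond:refcsc:2} are vacuous, and $S$ is refutable by a clause set cycle exactly when some clause set cycle $C(\Param)$ and some $n \in \mathbb{N}$ satisfy $S(\Succ^{n}\Param) \models C(\Param)$. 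The whole difficulty therefore concerns the descent, not the ground cases.

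\emph{Open induction proves $\neg S$.} This is the argument sketched in the text. From \eqref{A:4} one obtains the base case $x + (x + \Zero) = (x + x) + \Zero$ (both sides reduce to $x + x$), and from \eqref{A:5} the step $x + (x + y) = (x + x) + y \rightarrow x + (x + \Succ y) = (x + x) + \Succ y$. Hence open induction on $y$ with the quantifier-free formula $x + (x+y) = (x+x)+y$, in which $x$ occurs as a parameter, yields $\forall y\,(x+(x+y)=(x+x)+y)$; instantiating $y := x$ gives $\neg S(x)$, so $\theoryIndOpen{L} \vdash \neg S(x)$.

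\emph{No clause set cycle refutes $S$.} I would first sharpen Theorem \ref{theorem:upper_bound}: if a clause set $R(x)$ is refutable by a clause set cycle then $\neg R(x)$ is already provable from the \emph{parameter-free} instances of $\E{1}$ induction, i.e.\ from axioms $\formulaInd{x}{\psi}$ in which $\psi$ is an $\E{1}$ formula whose only free variable is the induction variable $x$. Indeed, in the proof of Theorem \ref{theorem:upper_bound} the only induction axioms used are the one for the case-distinction formula $C_n$ (an $\E{1}$ formula with no free variable besides its induction variable) and the one for $\neg S$, where $S$ is the clause set cycle; but a clause set cycle has only $\Param$ free, so $\neg S$ has only $\Param$ free, and all the remaining steps are pure first-order entailments. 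It then suffices to exhibit a single structure $\mathcal M$ that (a) satisfies \eqref{A:4}, \eqref{A:5} and all parameter-free $\E{1}$ induction axioms, and (b) contains an element $b$ with $b \PlusInter (b \PlusInter b) \neq (b \PlusInter b) \PlusInter b$ possessing $\Succ$-predecessors of every finite order. Given such $\mathcal M$, fix any clause set cycle $C$ and any $n$; letting $a$ be the $n$-fold $\Succ$-predecessor of $b$, one has $\mathcal M \models S(\Succ^{n}\Param)$ under $\Param := a$, while $\mathcal M \models \neg C(a)$ since the parameter-free $\E{1}$ induction present in $\mathcal M$ already proves $\neg C(x)$, exactly as in Proposition \ref{lemma:clause-set-cycles-are-unsatisfiable}; hence $S(\Succ^{n}\Param) \not\models C(\Param)$. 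The model $\mathcal M$ would be built by starting from a nonstandard model of $\E{1}$ induction — which has $\mathbb Z$-shaped blocks and interprets $+$ associatively — and deforming the graph of $+$ on a sufficiently sparse part of one nonstandard block, breaking associativity at a single $b$ while keeping every parameter-free $\E{1}$ induction instance valid.

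\emph{The main obstacle.} The hard part is the construction of $\mathcal M$, which is precisely where the statement remains conjectural: it is a separation of parameter-free $\E{1}$ induction over the weak base $\{\eqref{A:4}, \eqref{A:5}\}$ from open induction with parameters, for the concrete task of deriving associativity of $+$. Two points make it delicate. First, the deformation must preserve \emph{all} parameter-free $\E{1}$ induction instances; unlike instances carrying parameters these cannot be verified pointwise, so a global argument is needed that no parameter-free $\E{1}$ formula turns into a new inductive obstruction after the surgery. Second, associativity must not re-enter through a parameter-free route of higher quantifier complexity: over $\{\eqref{A:4}, \eqref{A:5}\}$ the $\Pi_1$ formula $\forall x \forall y\,((x+y)+z = x+(y+z))$ is inductive in $z$ and parameter-free, so parameter-free $\Pi_1$ induction \emph{does} prove associativity, and the argument is thus pinned to the (expected, but nontrivial) strictness of parameter-free $\E{1}$ induction inside parameter-free $\Pi_1$ induction. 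Should the minimal $S$ above nonetheless be refutable, the natural fallback is to replace the single disequation by a clause set encoding a property provable by parameterized open induction but whose every natural parameter-free inductive proof requires quantifier alternation.
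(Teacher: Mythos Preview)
The paper does not prove this statement; it is presented as a conjecture, and the preceding paragraph offers only the candidate clause set and the open-induction argument that you reproduce. There is thus no paper proof to compare against.

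Your positive direction is correct and is exactly the argument the paper sketches. Your sharpening of Theorem~\ref{theorem:upper_bound} to \emph{parameter-free} $\E{1}$ induction is also correct, and it is a genuine refinement the paper does not state: the only induction instances invoked there are $\formulaInd{x}{C_n}$ and $\formulaInd{x}{\neg C}$ for the cycle $C$, and in both the induction formula has $x$ as its sole free variable. This reduces the conjecture cleanly to exhibiting an $L$-structure, $L=\{\Zero,\Succ,+\}$, satisfying $\{\eqref{A:4},\eqref{A:5}\}$ together with every parameter-free $\E{1}$ induction axiom over $L$, in which $b+(b+b)\neq (b+b)+b$ for some (necessarily nonstandard) $b$.

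The gap you name is genuine and is the whole content of the conjecture. Your sketch of ``deforming $+$ on a sparse part of a nonstandard block'' is not a construction: axiom~\eqref{A:5} ties $a+\Succ b$ to $a+b$ along the entire $\Succ$-orbit of the second argument, so any local change propagates through a full $\mathbb{Z}$-chain, and on top of this every parameter-free $\E{1}$ instance must survive, which, as you say, is a global rather than pointwise constraint. Your observation that parameter-free $\A{1}$ induction already yields associativity via $\forall x\,\forall y\,((x+y)+z=x+(y+z))$, inductive in $z$, is correct and shows how narrow the target is. In sum, the strategy is sound and sharper than what the paper records, but what you have is a reduction to an unresolved model construction, not a proof; you have located the obstacle without removing it.
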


\section{The $n$-Clause Calculus: A Case Study}
\label{sec:orga779b6c}

In this section we will use the notion of clause set cycle in order to derive results about a concrete approach for \ac{AITP}---the n-clause calculus.
The n-clause calculus is a formalism for \ac{AITP} that was introduced by Kersani and Peltier in \cite{kersani2013combining}.
This calculus enhances the superposition calculus \cite{bachmair1994}, \cite{nieuwenhuis2001}---a refinement of resolution-paramodulation calculi---by a cycle detection mechanism.
This mechanism detects a certain type of cyclic dependencies between the clauses that are derived during the saturation process.
Such a cyclic dependency represents an argument by infinite descent and, therefore, represents an unsatisfiable subset of the derived clauses.
Once such a cycle is detected the refutation is terminated.
The n-clause calculus operates on a syntactically restricted fragment of the logical formalism presented in Section \ref{sec:org353db4f}.
The languages in this section are assumed to contain at least one other sort, say \(\iota\), besides the sort \(\sortNat\) of natural numbers.
Furthermore, the languages should not contain any other function symbols of range \(\sortNat\) besides \(\Zero\) and \(\Succ\).
By an n-clause we understand a clause of the form \(\forall \vec{x} \, ( N(\Param, \vec{x}) \vee C(\vec{x}))\), where \(N(\Param, \vec{x})\) is a disjunction of literals of the form \(\Param \neq t(\vec{x})\) and \(C\) is a disjunction of literals of the form \(t \bowtie s\) where \(\bowtie \ \in \{ =, \neq\}\), and \(t, s\) are terms of sort other than \(\sortNat\).
The formula \(N\) is called the constraint part of the n-clause.
An n-clause set is a conjunction of n-clauses.
For the sake of readability we will sometimes identify an n-clause set with the set of its conjuncts.
The notion of ``cycles'' of the n-clause calculus is based on the descent operator \(\downarrow_{j}\) with \(j \in \mathbb{N}\).
\begin{defi}
  \label{def:1}
  Let \(i \in \mathbb{N}\), and \(\mathcal{C} = \forall \vec{x} \, (N(\Param, \vec{x}) \vee C(\vec{x}))\) an n-clause with \(N = \bigvee_{j = 1}^{k} \Param \neq t_{j}\). Then we define \(\mathcal{C}{\downarrow_{i}} \coloneqq \forall \vec{x} \, (N(\Param,\vec{x}){\downarrow_{i}} \vee C(\vec{x}))\) where \(N{\downarrow_{i}} \coloneqq \bigvee_{j = 1}^{k}\Param \neq \Succ^{i}(t_{j})\).
  For an n-clause set \(S = \bigwedge_{j = 1}^{m}\mathcal{C}_{j}\) we define \(S{\downarrow_{i}} \coloneqq \bigwedge_{j = 1}^{m}\mathcal{C}_{j}{\downarrow_{i}}\).
\end{defi}
Intuitively, the \(\downarrow_{j}\) operation allows us to express that \(\Param\) is replaced by its \(j\)-th predecessor.
The following lemma states a crucial property of the \(\downarrow_{j}\) operator.
\begin{lem}
  \label{lem:2}
  \label{lemma:downarrow_and_inflation_cancel_out_each_other}
  Let \(S(\Param)\) be clause set and \(j \geq 0\), then we have \(S{\downarrow_{j}}(\Succ^{j}\Param) \vdash S(\Param)\).
\end{lem}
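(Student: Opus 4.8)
The plan is to reduce the claim to the case of a single n-clause and then observe that the resulting entailment is just an instance of replacing equals by equals underneath $\Succ^{j}$; no property of $\Succ$ beyond its being a function symbol is needed, so the statement holds in pure first-order logic without any background theory.

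First I would note that, by Definition~\ref{def:1}, if $S = \bigwedge_{m} \mathcal{C}_{m}$ is an n-clause set then $S{\downarrow_{j}} = \bigwedge_{m} \mathcal{C}_{m}{\downarrow_{j}}$, and since substitution of a term for $\Param$ commutes with conjunction we have $S{\downarrow_{j}}(\Succ^{j}\Param) = \bigwedge_{m} \mathcal{C}_{m}{\downarrow_{j}}(\Succ^{j}\Param)$. It therefore suffices to prove $\mathcal{C}{\downarrow_{j}}(\Succ^{j}\Param) \vdash \mathcal{C}(\Param)$ for an arbitrary single n-clause $\mathcal{C}$, because the entailment for the conjunction then follows conjunct by conjunct. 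The case $j = 0$ is trivial, as $\mathcal{C}{\downarrow_{0}} = \mathcal{C}$.

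Next I would fix $\mathcal{C} = \forall \vec{x}\,(N(\Param,\vec{x}) \vee C(\vec{x}))$ with $N = \bigvee_{l = 1}^{k} \Param \neq t_{l}(\vec{x})$, and unfold Definition~\ref{def:1} to get $\mathcal{C}{\downarrow_{j}}(\Succ^{j}\Param) = \forall \vec{x}\, \bigl( \bigvee_{l = 1}^{k} \Succ^{j}\Param \neq \Succ^{j}(t_{l}(\vec{x})) \vee C(\vec{x}) \bigr)$. For each index $l$, since $\Succ$ is interpreted as a function, the implication $\Param = t_{l}(\vec{x}) \rightarrow \Succ^{j}\Param = \Succ^{j}(t_{l}(\vec{x}))$ is logically valid; taking the contrapositive, $\Succ^{j}\Param \neq \Succ^{j}(t_{l}(\vec{x})) \vdash \Param \neq t_{l}(\vec{x})$. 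Hence, for a fixed $\vec{x}$,
\[
  \bigvee_{l = 1}^{k} \Succ^{j}\Param \neq \Succ^{j}(t_{l}(\vec{x})) \vee C(\vec{x}) \ \vdash\ \bigvee_{l = 1}^{k} \Param \neq t_{l}(\vec{x}) \vee C(\vec{x}),
\]
and since $\vec{x}$ was arbitrary we may reintroduce the universal quantifier to conclude $\mathcal{C}{\downarrow_{j}}(\Succ^{j}\Param) \vdash \mathcal{C}(\Param)$, which completes the argument.

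There is essentially no obstacle here; the one point worth being careful about is the direction of the congruence used: the argument only needs that $\Succ$ denotes a (single-valued) function, so that equal arguments yield equal values under $\Succ^{j}$ — it does not and need not appeal to injectivity of $\Succ$ — which is precisely why the lemma holds for an arbitrary n-clause set with no accompanying axioms.
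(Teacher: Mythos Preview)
Your argument is correct: reducing to a single n-clause and then using the contrapositive of the equality congruence $\Param = t_{l} \rightarrow \Succ^{j}\Param = \Succ^{j}t_{l}$ is exactly the right idea, and your remark that only functionality (not injectivity) of $\Succ$ is needed is precisely what distinguishes this lemma from the converse in Lemma~\ref{lem:3}. The paper states this lemma without proof, so there is nothing to compare against; your write-up is the intended routine verification.
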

The converse of the above entailment does not hold.
However it holds in a theory that provides at least the injectivity of the successor function.
\begin{lem}
  \label{lem:3}
  Let \(S(\Param)\) be a clause set and \(j \geq 0\), then
  \[\forall x \, \forall y \, (\Succ x = \Succ y \rightarrow x = y), S(\Param) \vdash S{\downarrow_{j}}(\Succ^{j}\Param).\]
\end{lem}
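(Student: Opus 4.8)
The plan is to reduce the statement, by the completeness of first-order logic, to the semantic entailment
\[
  \forall x \, \forall y \, (\Succ x = \Succ y \rightarrow x = y), \ S(\Param) \models S{\downarrow_{j}}(\Succ^{j}\Param),
\]
and then to verify it in an arbitrary model \(\mathcal{M}\) equipped with a valuation satisfying both premises. The only arithmetical ingredient I would need is the \(j\)-fold iteration of injectivity: applying the axiom \(\forall x \, \forall y \, (\Succ x = \Succ y \rightarrow x = y)\) exactly \(j\) times shows that \(\mathcal{M} \models \forall x \, \forall y \, (\Succ^{j} x = \Succ^{j} y \rightarrow x = y)\), equivalently its contrapositive \(\mathcal{M} \models \forall x \, \forall y \, (x \neq y \rightarrow \Succ^{j} x \neq \Succ^{j} y)\).

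Next I would reduce to a single n-clause. Since \(S{\downarrow_{j}}\) is by Definition \ref{def:1} the conjunction of \(\mathcal{C}{\downarrow_{j}}\) taken over the n-clauses \(\mathcal{C}\) of \(S\), it suffices to fix one such \(\mathcal{C} = \forall \vec{x} \, (\bigvee_{l = 1}^{k} \Param \neq t_{l}(\vec{x}) \vee C(\vec{x}))\) and show \(\mathcal{M} \models \mathcal{C}{\downarrow_{j}}(\Succ^{j}\Param)\). Because the terms \(t_{l}(\vec{x})\) do not contain \(\Param\), unfolding the definition gives \(\mathcal{C}{\downarrow_{j}}(\Succ^{j}\Param) = \forall \vec{x} \, (\bigvee_{l = 1}^{k} \Succ^{j}\Param \neq \Succ^{j}(t_{l}(\vec{x})) \vee C(\vec{x}))\). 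So fix values for \(\vec{x}\). If \(C(\vec{x})\) holds under this valuation we are done; otherwise \(\mathcal{M} \models \mathcal{C}(\Param)\) yields \(\Param \neq t_{l}(\vec{x})\) for some \(l \in \{1, \dots, k\}\), and the contrapositive of iterated injectivity recorded above then gives \(\Succ^{j}\Param \neq \Succ^{j}(t_{l}(\vec{x}))\) for that same \(l\), so the disjunction holds in either case. Ranging over all values of \(\vec{x}\) and all n-clauses \(\mathcal{C}\) of \(S\) yields \(\mathcal{M} \models S{\downarrow_{j}}(\Succ^{j}\Param)\).

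I do not expect a real obstacle here; the argument is routine. The only slightly delicate point is the bookkeeping: one must keep the universally quantified variables \(\vec{x}\) local to each n-clause and observe that \(\downarrow_{j}\) alters only the constraint literals \(\Param \neq t_{l}\) and leaves \(C(\vec{x})\) untouched --- this is precisely what makes the disjunct index \(l\) witnessing \(\mathcal{C}(\Param)\) coincide with the one witnessing \(\mathcal{C}{\downarrow_{j}}(\Succ^{j}\Param)\). n-clauses with empty constraint part, and more generally clauses not containing \(\Param\), are handled trivially, since for them \(\mathcal{C}{\downarrow_{j}}(\Succ^{j}\Param)\) is just \(\mathcal{C}\) itself.
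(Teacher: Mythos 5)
Your proof is correct; the paper states Lemma~\ref{lem:3} without proof, and your semantic verification — iterating the injectivity axiom \(j\) times, then doing a per-clause case split on whether \(C(\vec{x})\) holds, using that \(\downarrow_j\) only rewrites the constraint literals \(\Param \neq t_l(\vec{x})\) in which neither \(t_l\) nor \(C\) mentions \(\Param\) — is precisely the intended routine argument.
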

We can now introduce the notions of cycle and of refutability by a cycle.
For the sake of brevity we consider a simplified variant of the n-clause calculus defined in \cite{kersani2013combining}.
Only one of the simplifications imposed by us restricts the power of the formalism.
The cycles presented in \cite{kersani2013combining} rely on a decidable entailment relation \(\sqsupseteq\) between clauses such that \(\mathcal{C} \sqsupseteq \mathcal{D}\) implies \(\mathcal{C} \models_{{\mathrm{KP}}} \mathcal{D}\), where \(\models_{\mathrm{KP}}\) is the entailment for the ``Kersani-Peltier'' standard semantics as presented in \cite{kersani2013combining}.
Whenever the original formalism requires \(\mathcal{C} \sqsupseteq \mathcal{D}\), we require that \(\mathcal{C} \models \mathcal{D}\).
On the one hand the relation \(\models\) is not decidable, but on the other hand the relation \(\models\) allows us to use the completeness of first-order logic.
The latter would not be possible with \(\sqsupseteq\) because this relation relies on standard semantics for which the completeness theorem does not hold.
According to \cite{kersani2013combining} the relation \(\sqsupseteq\) is intended to abstract decidable relations such as syntactic equality or subsumption that also satisfy our stronger requirement.
Hence our restriction does not rule out any practically relevant instance of the n-clause calculus.
Finally, our restriction does not limit the generality of Corollary \ref{pro:2} below, since a similar argument could be used, assuming a suitable choice of \(\sqsupseteq\).
\begin{defi}
  \label{def:5}
  Let \(R(\Param)\) be an n-clause set.
  A triple \((i,j, S(\Param))\) with \(i, j \in \mathbb{N}\), \(j >0\) and \(S \subseteq R\) is a cycle for \(R\) if \(S \vdash \Param \neq \numeral{k}\) for \(k = i, \dots, i + j - 1\) and \(S \vdash S{\downarrow_{j}}\).
  We say that \(R\) is refuted by a cycle if there exists a cycle \((i,j,S)\) for \(R\) and \(R \vdash \Param \neq \numeral{k}\), for \(k = 0, \dots, i-1\).
\end{defi}
A cycle \((i,j,S(\Param))\) for a clause set \(R(\Param)\) is similar to an argument by induction with an offset \(i\) and a step with \(j\).
Accordingly, the conditions \(S \vdash n \neq \numeral{k}\) for \(k = i, \dots, i + j - 1\) correspond to the \(j\) base cases, whereas the condition \(S \vdash S{\downarrow_{j}}\) corresponds to the step case.

Cycles of the n-clause calculus are thus structurally similar to clause set cycles.
As announced in Section \ref{sec:org62947a1} we will show that clause set cycles are an abstraction of the cycles of the n-clause calculus.
In order to show that every n-clause set refutable by a cycle is also refuted by a clause set cycle it essentially remains to show that the argument by induction with offset \(i\) and step \(j\) can be turned into an argument by structural induction.
\begin{prop}
  \label{pro:1}
  Let \(R\) be an n-clause set refuted by a cycle, then \(R\) is refuted by a clause set cycle.
\end{prop}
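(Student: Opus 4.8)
The plan is to transform a cycle $(i,j,S(\Param))$ for $R$ into a clause set cycle together with a case distinction that refutes $R$, using the already-established machinery of Proposition~\ref{pro:3} (which shows that offset and step are inessential for clause set cycles). First I would observe that the conditions in Definition~\ref{def:5}, namely $S \vdash \Param \neq \numeral{k}$ for $k = i, \dots, i+j-1$ and $S \vdash S{\downarrow_{j}}$, need to be rephrased as entailments between clause sets of the form $S(\cdot)$ so that they match the format of a clause set cycle with offset and step in the sense of Definition~\ref{def:2}. The condition $S \vdash \Param \neq \numeral{k}$ together with the fact that $S(\numeral{k})$ then contains the contradictory literal $\numeral{k} \neq \numeral{k}$ gives $S(\numeral{k}) \models \bot$; applying this with $k = i+l$ for $l = 0, \dots, j-1$ yields the base-case conditions $S(\numeral{l+i}) \models \bot$.

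For the step condition, I would use Lemma~\ref{lem:2} (\texttt{lemma:downarrow\_and\_inflation\_cancel\_out\_each\_other}), which states $S{\downarrow_{j}}(\Succ^{j}\Param) \vdash S(\Param)$. Combined with $S \vdash S{\downarrow_{j}}$ — read as $S(\Param) \vdash S{\downarrow_{j}}(\Param)$, hence $S(\Succ^{j}\Param) \vdash S{\downarrow_{j}}(\Succ^{j}\Param)$ by substituting $\Succ^{j}\Param$ for $\Param$ — we obtain $S(\Succ^{j}\Param) \vdash S(\Param)$, i.e. $S(\Succ^{j}\Param) \models S(\Param)$. That is exactly the statement that $S(\Param)$ is a clause set cycle with offset $0$ and step $j$ (in fact the offset genuinely being $i$ here is harmless: $S(\Succ^{i+j}\Param) \models S(\Succ^{i}\Param)$ follows by substitution as well, matching Definition~\ref{def:2} directly). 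On the refutation side, the extra condition $R \vdash \Param \neq \numeral{k}$ for $k = 0, \dots, i-1$ gives $R(\numeral{k}) \models \bot$ for those $k$, and since trivially $R(\Succ^{i}\Param) \models S(\Succ^{i}\Param)$ would require $R \vdash S$ at the shifted argument — but actually we have $S \subseteq R$, so $R(\Param) \models S(\Param)$ and therefore $R(\Succ^{i}\Param) \models S(\Succ^{i}\Param)$ — we conclude that $R$ is refuted by the clause set cycle $S$ with offset $i$ and step $j$ in the sense of Definition~\ref{def:2} (with $n := i$).

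Having reached that point, the result follows immediately from Proposition~\ref{pro:3}: since $R(\Param)$ is refuted by the $L$ clause set cycle $S(\Param)$ with offset $i$ and step $j$, it is refuted by a clause set cycle simpliciter. I would also remark that one must check that $S(\Param)$, being an n-clause set and hence in particular an $L$ clause set, legitimately counts as an $L$ clause set cycle — this is immediate since n-clauses are clauses and the definitions of clause set cycle and refutability by a clause set cycle were stated for arbitrary $L$.

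I do not expect a serious obstacle here; the proposition is essentially a bookkeeping exercise matching the syntactic shape of cycles in Definition~\ref{def:5} to that of clause set cycles with offset and step in Definition~\ref{def:2}. The one point requiring a little care is the direction of Lemma~\ref{lem:2} and the substitution of $\Succ^{j}\Param$ for $\Param$ in the hypothesis $S \vdash S{\downarrow_{j}}$; getting the composition $S(\Succ^{j}\Param) \vdash S{\downarrow_{j}}(\Succ^{j}\Param) \vdash S(\Param)$ right is the crux, and everything else is routine. One should also be mindful that the entailment $\vdash$ in Definition~\ref{def:5} and the semantic entailment $\models$ in the clause-set-cycle definitions coincide by completeness of first-order logic, which the authors have already invoked elsewhere.
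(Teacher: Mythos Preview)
Your proposal is correct and follows essentially the same line as the paper's own proof: both derive $S(\numeral{i+k}) \models \bot$ from $S \vdash \Param \neq \numeral{i+k}$ by instantiating $\Param$, both obtain the descent $S(\Succ^{i+j}\Param) \models S(\Succ^{i}\Param)$ by combining $S \vdash S{\downarrow_{j}}$ with Lemma~\ref{lem:2}, and both use $S \subseteq R$ for the refutation side. The only organisational difference is that you factor through Definition~\ref{def:2} and Proposition~\ref{pro:3}, whereas the paper inlines the construction by directly exhibiting the clause set cycle $T(\Param) \coloneqq \bigvee_{k=0}^{j-1} S(\Succ^{k+i}\Param)$ and verifying conditions \eqref{cond:csc:1}, \eqref{cond:csc:2}, \eqref{cond:refcsc:1}, \eqref{cond:refcsc:2} by hand; this is exactly the disjunction that the proof of Proposition~\ref{pro:3} builds, so the two arguments coincide up to whether that machinery is invoked or unfolded. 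Your route is the more economical one. One minor wording point: $S(\numeral{k})$ does not literally \emph{contain} the literal $\numeral{k} \neq \numeral{k}$; rather, substituting $\Param \mapsto \numeral{k}$ into the derived sequent $S(\Param) \vdash \Param \neq \numeral{k}$ yields $S(\numeral{k}) \vdash \numeral{k} \neq \numeral{k}$, hence $S(\numeral{k}) \vdash \bot$.
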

\begin{proof}
  Let \((i,j,S(\Param))\) be a cycle refuting \(R\).
  Consider the formula
\[
  T(\Param) \coloneqq \bigvee_{k = 0}^{j - 1}S(\Succ^{k+i}\Param).
\]
It is not difficult to see that \(T(\Param)\) is logically equivalent to a clause.
Since \(S\) is a cycle, we have \(S(\Param) \vdash \Param \neq \numeral{i + k}\) for \(k = 0, \dots, j - 1\).
Therefore by instantiating \(\Param\) by \(\numeral{i + k}\) we obtain \(S(\numeral{i + k}) \vdash \bot\) for \(k = 0, \dots, j - 1\).
Hence we have \(T(\Zero) \vdash \bot \).

Let \(k \in \{ 0, \dots, j- 2\}\), then we clearly have \(S(\Succ^{k + i + 1}\Param) \vdash T(\Param)\).
Now let \(k = j - 1\).
Since \(S\) is a cycle, we have \(S \vdash S{\downarrow_{j}}\).
Thus by Lemma \ref{lemma:downarrow_and_inflation_cancel_out_each_other} we obtain \(S(\Succ^{i + j}\Param) \vdash S(\Succ^{i}\Param)\).
Therefore \(T(\Succ\Param) \vdash T(\Param)\).
Thus, by the soundness of first-order logic \(T(\Param)\) is a clause set cycle.
Since \(S \subseteq R\) we have \(R \vdash S\) and therefore \(R(\Succ^{i}\Param) \vdash T(\Param)\).
Now let \(k = 0, \dots, i - 1\), then since \(R \vdash n \neq \numeral{k}\) we have \(R(\numeral{k}) \vdash \bot\).
Therefore, by the soundness of first-order logic \(R\) is refuted by the clause set cycle \(T\).
\end{proof}
By the above proposition the notion of refutation of an n-clause set by a cycle is also not stronger than \(\E{1}\) induction.
In the following we will show that an analogue of Theorem \ref{theorem:clause_set_cycles_not_contained_in_open_induction} also holds for the n-clause calculus.
Let \(L\) be the language consisting of the two sorts \(\sortNat\), \(\iota\), the function symbols \(\Zero : \sortNat\), \(\Succ : \sortNat \to \sortNat\), \(+: \sortNat \times \iota \to \iota\), \(c : \iota\), and the predicate symbol \(\TriSymbol : \sortNat \times \iota\).
We will again use the predicate symbol \(\TriSymbol\) as an infix symbol.
Let \(S(\Param)\) be the \(L\) n-clause set consisting of the following n-clauses.
\begin{align}
  \Zero \ & \TriSymbol \ c, \label{clause:n_clause_open_induction_1} \tag{C1} \\
  \forall x \, \forall y \, (\Tri{x}{y} & \rightarrow \Tri{\Succ(x)}{\Succ(x) + y}), \label{clause:n_clause_open_induction_2} \tag{C2} \\
  \forall x \, \forall y \, (\Param = x & \rightarrow \neg \Tri{x}{y}) \label{clause:n_clause_open_induction_3} \tag{C3}.
\end{align}
\begin{lem}
  \label{lem:4}
  The n-clause set \(S(\Param)\) is refuted by a cycle.
\end{lem}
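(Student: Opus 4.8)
The plan is to exhibit an explicit cycle $(i,j,S'(\Param))$ for the n-clause set $S(\Param)$ in the sense of Definition~\ref{def:5}, mirroring the argument already used in Lemma~\ref{lemma:S_triangle_refutable_with_clause_set_cycles} for the clause-set-cycle formalism. The natural choice is offset $i = 0$, step $j = 1$, and $S'(\Param) \coloneqq S(\Param)$ itself, so that what must be checked is (i) $S(\Param) \vdash \Param \neq \numeral{0}$ and (ii) $S(\Param) \vdash S{\downarrow_{1}}(\Param)$; no base cases below the offset remain, so the final side condition $R \vdash \Param \neq \numeral{k}$ for $k < i$ is vacuous.

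First I would handle (i). Instantiating \eqref{clause:n_clause_open_induction_3} with $x \coloneqq \Zero$ and using \eqref{clause:n_clause_open_induction_1}, we get that $\Param = \Zero$ together with $\Zero \TriSymbol c$ is contradictory, hence $S(\Zero) \vdash \bot$, i.e.\ $S(\Param) \vdash \Param \neq \numeral{0}$. Next I would handle (ii). Unfolding Definition~\ref{def:1}, the clause set $S{\downarrow_{1}}(\Param)$ consists of \eqref{clause:n_clause_open_induction_1}, \eqref{clause:n_clause_open_induction_2} (both unchanged, as they contain no constraint literal involving $\Param$), and the descended constraint clause $\forall x \, \forall y \, (\Param = \Succ(x) \rightarrow \neg \Tri{x}{y})$. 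Since the first two clauses are literally present in $S(\Param)$, it remains to derive the descended third clause from $S(\Param)$. Working under the hypothesis $\Param = \Succ(x)$ for arbitrary $x, y$: from \eqref{clause:n_clause_open_induction_3} instantiated at $\Succ(x)$ and $\Succ(x)+y$ we obtain $\neg \Tri{\Succ(x)}{\Succ(x)+y}$, and the contrapositive of \eqref{clause:n_clause_open_induction_2} then yields $\neg \Tri{x}{y}$, as required. Hence $S(\Param) \vdash S{\downarrow_{1}}(\Param)$, so $(0,1,S(\Param))$ is a cycle for $S(\Param)$, and since a cycle trivially refutes the n-clause set it is a subset of, $S(\Param)$ is refuted by a cycle.

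I do not expect any serious obstacle here; the lemma is essentially a transcription of Lemma~\ref{lemma:S_triangle_refutable_with_clause_set_cycles} into the n-clause vocabulary, with the $\downarrow_{1}$ operator playing the role that the shift $\Param \mapsto \Succ\Param$ played before. The one point that warrants care is purely bookkeeping: checking that the clauses \eqref{clause:n_clause_open_induction_1} and \eqref{clause:n_clause_open_induction_2} genuinely have empty constraint part (no literal of the form $\Param \neq t$), so that $\downarrow_{1}$ leaves them untouched and only the constraint clause \eqref{clause:n_clause_open_induction_3} is affected — and that this set is indeed a legal n-clause set over the two-sorted language $L$, i.e.\ that $+$, $c$, and $\TriSymbol$ only ever produce or compare terms of the non-$\sortNat$ sort $\iota$, which is immediate from the declared types.
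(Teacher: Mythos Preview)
Your proposal is correct and matches the paper's proof essentially line for line: the paper also exhibits the cycle $(0,1,S(\Param))$, obtains $S \vdash \Param \neq \numeral{0}$ by resolving \eqref{clause:n_clause_open_induction_1} with \eqref{clause:n_clause_open_induction_3}, and obtains $S \vdash S{\downarrow_{1}}$ by resolving \eqref{clause:n_clause_open_induction_2} with \eqref{clause:n_clause_open_induction_3} to derive the descended constraint clause $\Param = \Succ(x) \rightarrow \neg \Tri{x}{y}$. Your extra bookkeeping (that \eqref{clause:n_clause_open_induction_1} and \eqref{clause:n_clause_open_induction_2} have empty constraint part and are therefore fixed by $\downarrow_{1}$) is a helpful clarification the paper leaves implicit.
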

\begin{proof}
  By resolving the clauses \eqref{clause:n_clause_open_induction_1} and \eqref{clause:n_clause_open_induction_3}, we obtain \(S \vdash \Param \neq \Zero\).
  Resolving the clauses \eqref{clause:n_clause_open_induction_2} and \eqref{clause:n_clause_open_induction_3} yields \(\Param = \Succ(x) \rightarrow \neg \Tri{x}{y}\).
  Hence we have \(S \vdash S{\downarrow_{1}}\).
  Thus the triple \((0, 1, S(\Param))\) is a cycle for \(S(\eta)\).
  Therefore \(S(\eta)\) is refuted by a cycle.
\end{proof}
Let us now investigate whether \(S(\eta)\) can be refuted by open induction.
\begin{prop}
  \label{pro:4}
  \(\theoryIndOpen{L} \not \vdash \neg S(\Param)\).
\end{prop}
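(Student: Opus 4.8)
The plan is to reduce the claim to Proposition~\ref{proposition:triangular-numbers-not-total-with-quantifier-free-induction} by relating the multi-sorted $L$ n-clause set $S(\Param)$ to the one-sorted theory of triangular numbers. First I would argue contrapositively: assume $\theoryIndOpen{L} \vdash \neg S(\Param)$ and derive a contradiction. Spelling out $\neg S(\Param)$ as in Lemma~\ref{lemma:clause_set_cycle_reduces_to_theory}, the clauses \eqref{clause:n_clause_open_induction_1} and \eqref{clause:n_clause_open_induction_2} without the parameter-constraint clause \eqref{clause:n_clause_open_induction_3} form a theory $T$ over $L$, and $\neg S(\Param)$ becomes $\neg(T \wedge \neg \exists y \, \Tri{\Param}{y})$, i.e.\ $T \rightarrow \exists y \, \Tri{\Param}{y}$. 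Since $\Param$ is a free variable not occurring in $T$, this gives $\theoryIndOpen{L} \vdash T \rightarrow \forall x \, \exists y \, \Tri{x}{y}$, hence $T + \theoryIndOpen{L} \vdash \forall x \, \exists y \, \Tri{x}{y}$, exactly paralleling Lemma~\ref{lemma:clause_set_cycle_reduces_to_theory}.

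The next step is to transfer this to the one-sorted setting where Proposition~\ref{proposition:triangular-numbers-not-total-with-quantifier-free-induction} applies. Here the sort $\iota$ plays the role that $\sortNat$ played in $\TDelta$: the ``triangular number'' $\triangle_n$ lives in $\iota$, with $c$ as the zero of $\iota$ and $+ : \sortNat \times \iota \to \iota$ as the map that, applied $\Succ x$ times, builds $\triangle_{x}$. I would exhibit an interpretation (in the sense of relative interpretability) of the two-sorted theory $T + \theoryIndOpen{L}$ into $\TDelta + \theoryIndOpen{L_{\triangleright}}$, or more simply construct, from an arbitrary model $\mathcal{N} \models \TDelta + \theoryIndOpen{L_{\triangleright}}$ that falsifies $\forall x \, \exists y \, \Tri{x}{y}$, a two-sorted model of $T + \theoryIndOpen{L}$ that also falsifies it: interpret $\sortNat$ as the $\sortNat$-part of $\mathcal{N}$, interpret $\iota$ also as the domain of $\mathcal{N}$ (or the standard cut, adjusting as needed), interpret $\Tri{\cdot}{\cdot}$ by the $\TriSymbol$ of $\mathcal{N}$, $c$ by $\ZeroInter$, and $+ : \sortNat \times \iota \to \iota$ by the addition of $\mathcal{N}$. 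Then \eqref{clause:n_clause_open_induction_1}, \eqref{clause:n_clause_open_induction_2} hold by \eqref{A:6}, \eqref{A:7:1}, and open induction over $L$ transfers from open induction over $L_{\triangleright}$ since every quantifier-free $L$ formula, under this interpretation, becomes a quantifier-free $L_{\triangleright}$ formula. Since $\mathcal{N}$ falsifies $\forall x \, \exists y \, \Tri{x}{y}$, so does the constructed model, contradicting $T + \theoryIndOpen{L} \vdash \forall x \, \exists y \, \Tri{x}{y}$.

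Alternatively—and perhaps more cleanly—one can avoid building the model by hand and instead invoke Proposition~\ref{proposition:triangular-numbers-not-total-with-quantifier-free-induction} directly: observe that $T$ is a subtheory of (the $L$-reduct of) $\TDelta$ modulo the sort renaming, and that a proof of $\forall x\,\exists y\,\Tri{x}{y}$ in $T + \theoryIndOpen{L}$ would, after identifying the sorts, yield a proof in $\TDelta + \theoryIndOpen{L_{\triangleright}}$, which Proposition~\ref{proposition:triangular-numbers-not-total-with-quantifier-free-induction} rules out. The main obstacle is bookkeeping the sort distinction: in the n-clause setting $\Zero, \Succ$ live on $\sortNat$ while $\triangle_n$ and $c$ live on $\iota$, and there is no $\Pred$ or $+ : \sortNat \times \sortNat \to \sortNat$ present, so one must check that the induction schemes match up and that no $L_{\triangleright}$-specific axiom (such as \eqref{A:1}--\eqref{A:3} or \eqref{A:8}) is secretly needed—it is not, because Proposition~\ref{proposition:triangular-numbers-not-total-with-quantifier-free-induction} already asserts non-provability from the \emph{full} theory $\TDelta + \theoryIndOpen{L_{\triangleright}}$, which is stronger than what the translation of $T + \theoryIndOpen{L}$ requires. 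I expect the remaining verifications (that $T(\Param) \wedge \neg\exists y\,\Tri{\Param}{y}$ is logically the negation of $S(\Param)$, that $\Param$ is eliminable by the deduction theorem, that quantifier-free formulas translate to quantifier-free formulas) to be routine.
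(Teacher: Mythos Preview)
Your proposal is correct and follows essentially the same route as the paper: reduce by contradiction, collapse the two sorts into one, and appeal to the unprovability result for the one-sorted triangle theory. The paper's proof is slightly more streamlined in that it performs the sort collapse and the substitution $c \mapsto \Zero$ directly on the derivability claim $\theoryIndOpen{L} \vdash \neg S(\Param)$ and then invokes Corollary~\ref{cor:1} (which already packages your deduction-theorem unpacking together with Proposition~\ref{proposition:triangular-numbers-not-total-with-quantifier-free-induction}), whereas you first unfold $\neg S(\Param)$ into the totality statement and only then translate; but the content is the same.
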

\begin{proof}
  Assume that \(\theoryIndOpen{L} \vdash \neg S(\Param)\).
  Let \(L'\) be the one-sorted language obtained from \(L\) by replacing the sort \(\iota\) by the sort \(\sortNat\).
 We then have \(\theoryIndOpen{L'} \vdash \neg S(\Param)\).
By replacing the constant \(c\) by \(\Zero\), we obtain
\[
  \theoryIndOpen{L_{\TriSymbol}'} \vdash \neg S[c/\Zero](\Param).
\]
This implies \(\theoryIndOpen{L_{\TriSymbol}'} \vdash \neg S_{\TriSymbol}(\Param)\), thus, contradicting Corollary \ref{cor:1}.
\end{proof}
By Lemma \ref{lem:4} and Proposition \ref{pro:4} we thus have:
\begin{cor}
  \label{pro:2}
  There exists a language \(L\) and an \(L\) n-clause set \(S(\Param)\) refuted by a cycle such that \(\theoryIndOpenL \not \vdash \neg S(\Param)\).
\end{cor}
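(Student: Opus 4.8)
The plan is simply to assemble the two results that immediately precede this corollary, for which most of the work has already been done. Take $L$ to be the two-sorted language with sorts $\sortNat, \iota$, function symbols $\Zero : \sortNat$, $\Succ : \sortNat \to \sortNat$, $+ : \sortNat \times \iota \to \iota$, $c : \iota$, and predicate $\TriSymbol : \sortNat \times \iota$, and let $S(\Param)$ be the n-clause set consisting of \eqref{clause:n_clause_open_induction_1}--\eqref{clause:n_clause_open_induction_3}. Lemma \ref{lem:4} already exhibits a cycle refuting $S(\Param)$, namely the triple $(0,1,S(\Param))$, so the first conjunct of the conclusion holds for this witness. Proposition \ref{pro:4} already states $\theoryIndOpen{L} \not\vdash \neg S(\Param)$, which is the second conjunct. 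Hence the corollary follows with no further argument beyond citing these two facts.

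Since nothing genuinely remains to be proved at this level, the only thing worth recording is where the content of the two cited facts lives. Lemma \ref{lem:4} is a short saturation-style computation: resolving \eqref{clause:n_clause_open_induction_1} against \eqref{clause:n_clause_open_induction_3} gives $S \vdash \Param \neq \Zero$, and resolving \eqref{clause:n_clause_open_induction_2} against \eqref{clause:n_clause_open_induction_3} gives $S \vdash S{\downarrow_1}$, so $(0,1,S)$ satisfies Definition \ref{def:5}. Proposition \ref{pro:4} is where the weight sits: one collapses the sort $\iota$ onto $\sortNat$ and substitutes $\Zero$ for $c$, turning $\neg S(\Param)$ into $\neg S_{\TriSymbol}(\Param)$, and then invokes Corollary \ref{cor:1}. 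That corollary in turn rests on Lemma \ref{lemma:S_triangle_refutable_with_clause_set_cycles} and, crucially, on Proposition \ref{proposition:triangular-numbers-not-total-with-quantifier-free-induction}, whose proof uses the finite universal axiomatization from Theorem \ref{theorem:modified-shoenfield-theorem} together with Herbrand's theorem and the growth-rate mismatch between the linear Herbrand terms and the quadratic triangle function in $\mathbb{N}^{\TriSymbol}$.

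The main obstacle is therefore not in this corollary at all but has already been overcome earlier, namely in establishing Theorem \ref{theorem:modified-shoenfield-theorem} (the Shoenfield-style model-theoretic argument that every model of $\TDeltaI$ satisfies open induction over $L_{\TriSymbol}$) and in the Herbrand-theoretic separation of Proposition \ref{proposition:triangular-numbers-not-total-with-quantifier-free-induction}. At the present level the proof is a one-line composition: apply Lemma \ref{lem:4} for refutability by a cycle and Proposition \ref{pro:4} for unprovability of $\neg S(\Param)$ in $\theoryIndOpenL$.
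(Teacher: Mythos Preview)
Your proposal is correct and matches the paper's approach exactly: the paper derives the corollary in one line from Lemma~\ref{lem:4} and Proposition~\ref{pro:4}, precisely as you do. The additional commentary you give on where the real work lies (Theorem~\ref{theorem:modified-shoenfield-theorem} and Proposition~\ref{proposition:triangular-numbers-not-total-with-quantifier-free-induction}) is accurate but goes beyond what is needed here.
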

\section{Conclusion}
We have introduced the concept of clause set cycles and the notion of refutability by a clause set cycle.
Clause set cycles abstract the analogous concepts of cycle and refutability by a cycle of the n-clause calculus.
The main advantage of clause set cycles is their semantic nature, which makes them independent of any inference system.
This independence of an inference system allows for a more general analysis of the properties of this type of cycle.

We have explained clause set cycles in terms of theories of induction.
We first have shown that refutability by a clause set cycle is contained in the theory of \(\E{1}\) induction.
On the other hand refutation by a clause set cycle is not contained in the theory of open induction and we even conjecture that open induction is incomparable with the refutability by a clause set cycle.
Finally, we have transferred these results to the n-clause calculus.
The results allow us to formally situate the strength of a variant of the n-clause calculus with respect to induction, where we formerly only had empirical evidence.
The formal results described in this article improve our understanding of the strength of the approaches for \ac{AITP} based on clause set cycles and help to direct further research.

As mentioned in the introduction, the analysis of clause set cycles is part of a research program which aims at studying methods for automated inductive theorem proving in order to improve the theoretical foundations of this subject.
One of the next questions to consider is how clause set cycles can be extended to handle multiple parameters, how this extension would impact the power of the formalism, and how the addition of parameters can be explained from the perspective of induction.
Another question to consider is how the enhancement of superposition by structural induction presented by Cruanes in \cite{cruanes2017superposition} is related to clause set cycles.
A further topic of interest are approaches to AITP based on cyclic sequent calculi such as the calculus introduced by Brotherston and Simpson in \cite{brotherston2010sequent}.
For instance the inductive theorem prover ``Cyclist'' \cite{brotherston2012generic} is based on the cut-free fragment of this cyclic calculus.
Recently Das \cite{das2020} has shown that in the context of arithmetic the logical consequences of cyclic proofs containing only \(\Sigma_{n}\) formulas are contained in the theory \(I\Sigma_{n + 1}\).
In the setting of arithmetic this result already gives us an upper bound for provers such as ``Cyclist'', however this bound may be improved by taking into account the cut-freeness of the proofs output by ``Cyclist''.

\section*{Acknowledgments}

The authors would like to thank the anonymous reviewers whose insightful feedback has helped to improve this paper.

\bibliographystyle{alpha}
\bibliography{bibliography.bib}

%\printbibliography

\begin{acronym}
  \acro{AITP}{automated inductive theorem proving}
\end{acronym}

\end{document}